\documentclass[11pt,reqno]{amsart}
\usepackage{amscd,amssymb,amsmath,amsthm}
\usepackage[arrow,matrix]{xy}
\usepackage{graphicx}
\usepackage{cite}
\topmargin=0.1in \textwidth5.8in \textheight7.8in
\newtheorem{thm}[subsection]{Theorem}
\newtheorem{lemma}[subsection]{Lemma}
\newtheorem{pro}[subsection]{Proposition}

\newtheorem{rk}[subsection]{Remark}
\newtheorem{defn}[subsection]{Definition}

\numberwithin{equation}{section} \setcounter{tocdepth}{1}

\newcommand{\bea}{\begin{eqnarray}}
\newcommand{\eea}{\end{eqnarray}}

\newcommand{\Z}{\mathbb{Z}}
\newcommand{\Q}{\mathbb{Q}}






\begin{document}
\title[On $p$-adic Gibbs measures of HC model]{On $p$-adic Gibbs Measures for Hard Core Model on a Cayley Tree}

\author{D. Gandolfo, U. A. Rozikov, J. Ruiz}

 \address{D.\ Gandolfo and J.\ Ruiz\\Centre de Physique Th\'eorique, UMR 6207, Universit\'es Aix-Marseille
 et Sud Toulon-Var, Luminy Case 907, 13288 Marseille, France.}
\email {gandolfo@cpt.univ-mrs.fr\ \ ruiz@cpt.univ-mrs.fr}

 \address{U.\ A.\ Rozikov\\ Institute of mathematics and information technologies,
29, Do'rmon Yo'li str., 100125, Tashkent, Uzbekistan.}
\email {rozikovu@yandex.ru}

\begin{abstract} In this paper we consider a nearest-neighbor $p$-adic hard core (HC) model, with fugacity $\lambda$, on a homogeneous Cayley tree of order $k$ (with $k + 1$ neighbors). We focus on $p$-adic Gibbs measures for the HC model, in particular on $p$-adic "splitting" Gibbs measures generating a $p$-adic Markov chain along each path on the tree. We show that the $p$-adic HC model is completely different from real HC model: For a fixed $k$ we prove that the $p$-adic HC model may have a splitting Gibbs measure only if $p$ divides $2^k-1$. Moreover if $p$ divides $2^k-1$ but does not divide $k+2$ then there exists unique translational invariant $p$-adic Gibbs measure. We also study $p$-adic periodic splitting Gibbs measures
and show that the above model admits only translational invariant and periodic with
period two (chess-board) Gibbs measures. For $p\geq 7$ (resp. $p=2,3,5$) we give necessary and sufficient (resp. necessary) conditions for the existence of a periodic $p$-adic measure.   For $k=2$ a $p$-adic splitting Gibbs measures exists if and only if $p=3$,
in this case we show that if $\lambda$ belongs to a $p$-adic ball of radius $1/27$ then there are precisely two periodic (non translational invariant) $p$-adic Gibbs measures. We prove that a $p$-adic Gibbs measure is bounded if and only if $p\ne 3$.
\end{abstract}
\maketitle

{\bf Mathematics Subject Classifications (2010).} 46S10, 82B26, 12J12 (primary);
60K35 (secondary)

{\bf{Key words.}} Cayley trees, hard core interaction, Gibbs measures, translation
invariant measures, periodic measures, splitting measures, $p$-adic numbers.

\section{Introduction} \label{sec:intro}

In \cite{SR} a hard core (HC) model with nearest neighbor
interaction and spin values $0,1$ on a Cayley tree was studied.
In this paper we consider $p$-adic version of this model.

One of the central problems in the theory of Gibbs measures of
lattice systems is to describe infinite-volume (or limiting) Gibbs
measures corresponding to a given Hamiltonian. A complete analysis
of this set is often a difficult problem. Many papers have been
devoted to these studies when the underlying lattice is a Cayley
tree \cite{y1,y2,B,G, yp1, 17, MRS, y11, yp2,yp3,SR,y16}.

In all these works the models under consideration have a finite set of spin values on the field of real numbers. These models have the following common property:
the  existence  of finitely many translation-invariant  and uncountable
numbers  of  non-translation-invariant extreme Gibbs
measures.
Also for several models it was proved that there exist periodic Gibbs measures (which are invariant with respect to normal  subgroups  of  finite index of the group representation of Cayley tree) and that there are uncountable number of non-periodic Gibbs measures.

On the other hand, various models described in the language of $p$-adic analysis
have been actively studied, see e.g. \cite{5, 14, 34, 49} and numerous applications of $p$-adic analysis to mathematical physics have been proposed
in \cite{7, 23, 24, 25}. Well-known studies in this area were devoted
to quantum mechanical models \cite{50, 48}. One of the first applications of $p$-adic numbers in quantum physics appeared in the framework of quantum logic \cite{8}.
This model is of special interest to us because it cannot be
described by using conventional real-valued probability.

It is also known \cite{24, 29, 34, 48} that
a number of $p$-adic models in physics cannot be described using ordinary Kolmogorov's probability
theory. In \cite{28} an abstract $p$-adic probability theory was developed by means of the theory
of non-Archimedean measures \cite{40}.

A non-Archimedean analog of the Kolmogorov theorem was proved in
\cite{16}. Such a result allows to construct wide classes of
stochastic processes and the possibility to develop statistical
mechanics in the context of $p$-adic theory.

We refer the reader to \cite{19}, \cite{26},\cite{Fg},
\cite{36}-\cite{38} where  various models of statistical physics
in the context of $p$-adic field are studied.

In the present paper we consider $p$-adic Gibbs measures of a hard core model on the Cayley tree  over the $p$-adic field (we refer the reader to \cite{SR} for the real case).

 The paper is organized as follows. Section 2 presents definitions and known results.  Section 3 is devoted to the standard construction of ($p$-adic) Gibbs measures characterized by a functional equation.  Section 4 contains conditions of solvability of this equation.
Under conditions on $p$ and on the degree $k$ of the tree, we prove in Section 5 the existence and uniqueness of translational-invariant $p$-adic Gibbs measure. In Section 6 we study $p$-adic periodic Gibbs measures
and show that the HC model admits only translational invariant and periodic with
period two (chess-board) Gibbs measures. For $k=2$ a $p$-adic splitting Gibbs measures exists if and only if $p=3$,
in this case we show that if $\lambda$ belongs to a $p$-adic ball of radius $1/27$ then there are precisely two periodic (non translational invariant) $p$-adic Gibbs measures. In Section 7 we prove that a $p$-adic Gibbs measure is bounded if and only if $p\ne 3$. In the last section devoted to concluding remarks, we present comparisons between real and $p$-adic Gibbs measures.

\section{Preliminaries}

\subsection{$p$-adic numbers and measures.} Let $\Q$ be the field of rational numbers. For a fixed prime number $p$, every rational number $x\ne 0$ can be represented
in the form $x = p^r{n\over m}$, where $r, n\in \Z$, $m$ is a positive integer, and $n$ and $m$ are relatively prime with $p$: $(p, n) = 1$, $(p, m) = 1$. The $p$-adic norm of $x$ is given by
$$|x|_p=\left\{\begin{array}{ll}
p^{-r}\ \ \mbox{for} \ \ x\ne 0\\
0\ \ \mbox{for} \ \ x = 0.
\end{array}\right.
$$
This norm is non-Archimedean  and satisfies the so called strong triangle inequality
$$|x+y|_p\leq \max\{|x|_p,|y|_p\}.$$

We will often use the following fact:
If $|x|_p\ne |y|_p$ then
$$|x+y|_p=\max\{|x|_p,|y|_p\}.$$

The completion of $\Q$ with respect to the $p$-adic norm defines the $p$-adic field
 $\Q_p$. Any $p$-adic number $x\ne 0$ can be uniquely represented
in the canonical form
\begin{equation}\label{ek}
x = p^{\gamma(x)}(x_0+x_1p+x_2p^2+\dots),
\end{equation}
where $\gamma(x)\in \Z$ and the integers $x_j$ satisfy: $x_0 > 0$, $0\leq x_j \leq p - 1$ (see
\cite{29,41,48}). In this case $|x|_p = p^{-\gamma(x)}$.

\begin{thm}\label{tx2} \cite{29}, \cite{48} The equation
$x^2 = a$, $0\ne a =p^{\gamma(a)}(a_0 + a_1p + ...), 0\leq a_j \leq p - 1$, $a_0 > 0$
has a solution $x\in \Q_p$ if and only if the following conditions
are fulfilled:

i) $\gamma(a)$ is even;

ii) $a_0$ is a quadratic residue modulo $p$ if $p\ne 2$; $a_1 = a_2 = 0$ if $p = 2$.
\end{thm}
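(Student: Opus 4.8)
The plan is to reduce the problem to the case of a $p$-adic unit by extracting the even power of $p$, and then to solve the unit case by a Hensel-type lifting argument, treating $p\neq 2$ and $p=2$ separately. First I would prove necessity of (i): if $x^2=a$ for some $x\in\Q_p$, then taking $p$-adic norms gives $|a|_p=|x|_p^2=p^{-2\gamma(x)}$, so $\gamma(a)=2\gamma(x)$ is forced to be even. Conversely, assuming $\gamma(a)=2m$ is even, I would write $a=p^{2m}u$ with $u=a_0+a_1p+\dots$ a $p$-adic unit ($|u|_p=1$). Since $x=p^m v$ solves $x^2=a$ if and only if $v^2=u$, it suffices to decide when the unit $u$ is a square among the units of $\Q_p$; note that the $p$-adic digits of $u$ are precisely $a_0,a_1,a_2,\dots$.

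For the case $p\neq 2$, I would apply Hensel's lemma to $f(X)=X^2-u$. The congruence $x_0^2\equiv a_0 \pmod p$ with $x_0\not\equiv 0$ is solvable exactly when $a_0$ is a quadratic residue modulo $p$. When such $x_0$ exists, $f'(x_0)=2x_0$ is a unit (here one uses both $p\neq 2$ and $x_0\not\equiv 0 \pmod p$), so Hensel's lemma lifts $x_0$ uniquely to a root $v\in\Z_p$ of $f$, giving $v^2=u$. Conversely, any $v$ with $v^2=u$ reduces modulo $p$ to a solution of $x_0^2\equiv a_0$, forcing $a_0$ to be a quadratic residue. This establishes (ii) in the case $p\neq 2$.

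For the case $p=2$, the unit $u$ automatically has $a_0=1$, and the naive Hensel lemma fails because $f'(X)=2X$ is never a unit; the claim becomes that $u$ is a square if and only if $a_1=a_2=0$, i.e. $u\equiv 1 \pmod 8$. Necessity follows from the direct computation $(1+2t)^2=1+4t(t+1)\equiv 1 \pmod 8$, valid for any $2$-adic integer $t$ since $t(t+1)$ is even. For sufficiency, assuming $u\equiv 1 \pmod 8$, I would construct the square root by successive approximation, building $v_n$ with $v_n^2\equiv u \pmod{2^{n+2}}$ and $v_{n+1}\equiv v_n$, so that the sequence converges $2$-adically to the desired root; this is exactly the refined form of Hensel's lemma applied with the sharp estimate $|f(x_0)|_2<|f'(x_0)|_2^2$ that holds once $u\equiv 1 \pmod 8$. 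The main obstacle is precisely this $p=2$ case, where the vanishing of $f'$ modulo $2$ obstructs elementary lifting and one must either invoke the sharpened Hensel lemma or carry out the explicit approximation; the reduction to units and the odd-$p$ case are routine.
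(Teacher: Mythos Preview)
Your argument is correct and is precisely the standard proof found in the references the paper cites (Koblitz; Vladimirov--Volovich--Zelenov): reduce to units via the parity of $\gamma(a)$, apply the ordinary Hensel lemma to $X^2-u$ for $p\neq 2$, and for $p=2$ use the congruence $(1+2t)^2\equiv 1\pmod 8$ together with the refined Hensel lifting (or an explicit successive-approximation) to handle the non-simple root.

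Note, however, that the paper does not give its own proof of this theorem at all: it is stated with citations to \cite{29} and \cite{48} as a known background result and is used only as a tool (in Section~6, to decide when $\sqrt{\lambda(\lambda-4)}$ exists in $\Q_3$). So there is nothing in the paper to compare your proof against beyond observing that you have reproduced the standard textbook argument those citations point to.
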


The elements of the set $\mathbb{Z}_p=\{x\in \Q_p: |x|_p\leq 1\}$ are called $p$-adic integers.

The following statement is known as Hensel's lemma \cite{29}.

\begin{thm}\label{hl} Let $F(x)=\sum_{i=0}^nc_ix^i$ be a polynomial whose coefficients are $p$-adic integers. Let
 $F'(x)=\sum_{i=0}^nic_ix^{i-1}$ be the derivative of $F(x)$. Let $a_0$ be a $p$-adic integer such that
 $F(a_0)\equiv 0 \,(\operatorname{mod} p)$ and  $F'(a_0)\neq 0 \,(\operatorname{mod} p)$. Then there exists a unique $p$-adic integer $a$ such that
 $F(a)=0$ and $a=a_0\,(\operatorname{mod} p)$.
 \end{thm}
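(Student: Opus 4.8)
The plan is to construct, by successive approximation (Newton's method), a sequence of $p$-adic integers $a_0, a_1, a_2, \dots$ that converges to the desired root, and then to invoke the completeness of $\mathbb{Z}_p$. The guiding observation is that the hypothesis $F'(a_0)\not\equiv 0\,(\operatorname{mod} p)$ makes $F'(a_0)$ a unit in $\mathbb{Z}_p$, i.e. $|F'(a_0)|_p=1$, which is precisely the non-degeneracy needed to lift an approximate root to a genuine one.

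First I would set up the induction. Suppose an integer $a_n\in\mathbb{Z}_p$ has been found with $a_n\equiv a_0\,(\operatorname{mod} p)$ and $F(a_n)\equiv 0\,(\operatorname{mod} p^{n+1})$. I then seek $a_{n+1}=a_n+tp^{n+1}$ with $t\in\mathbb{Z}_p$ so that $F(a_{n+1})\equiv 0\,(\operatorname{mod} p^{n+2})$. The key computational input is the finite, exact Taylor expansion of the polynomial,
$$F(a_n+h)=F(a_n)+F'(a_n)\,h+\sum_{j\ge 2}\frac{F^{(j)}(a_n)}{j!}\,h^j,$$
in which each divided coefficient $\frac{F^{(j)}(a_n)}{j!}=\sum_i\binom{i}{j}c_i a_n^{i-j}$ is again a $p$-adic integer because the binomial coefficients are integers. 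Taking $h=tp^{n+1}$, every term with $j\ge 2$ is divisible by $p^{2(n+1)}$ and hence by $p^{n+2}$, so writing $F(a_n)=p^{n+1}b_n$ with $b_n\in\mathbb{Z}_p$ gives
$$F(a_{n+1})\equiv p^{n+1}\bigl(b_n+F'(a_n)\,t\bigr)\,(\operatorname{mod} p^{n+2}).$$
Since $a_n\equiv a_0\,(\operatorname{mod} p)$ we have $F'(a_n)\equiv F'(a_0)\not\equiv 0\,(\operatorname{mod} p)$, so $F'(a_n)$ is invertible modulo $p$ and one solves $t\equiv -b_n\,F'(a_n)^{-1}\,(\operatorname{mod} p)$, completing the inductive step.

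Finally I would check convergence and uniqueness. By construction $|a_{n+1}-a_n|_p\le p^{-(n+1)}$, so $(a_n)$ is Cauchy in the complete space $\mathbb{Z}_p$ and converges to some $a\in\mathbb{Z}_p$ with $a\equiv a_0\,(\operatorname{mod} p)$; continuity of the polynomial together with $|F(a_n)|_p\le p^{-(n+1)}\to 0$ yields $F(a)=0$. For uniqueness, if $a'$ is another root with $a'\equiv a_0\,(\operatorname{mod} p)$, I would factor $0=F(a)-F(a')=(a-a')\bigl(F'(a')+(a-a')H\bigr)$ for some $H\in\mathbb{Z}_p$; here $|F'(a')|_p=1$ while $|(a-a')H|_p\le|a-a'|_p<1$ because $a\equiv a'\,(\operatorname{mod} p)$, so the strong triangle inequality forces the second factor to have norm $1$, hence $a=a'$. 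The main obstacle is the lifting step: it requires the integrality of the Taylor (Hasse) coefficients $\frac{F^{(j)}(a_n)}{j!}$ and the observation that the quadratic and higher terms vanish to sufficiently high order, which is exactly what lets the single non-degeneracy condition on $F'(a_0)$ propagate through all stages of the approximation.
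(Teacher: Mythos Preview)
Your argument is correct and is the standard Newton-iteration proof of Hensel's lemma: the integrality of the Hasse--Taylor coefficients $\binom{i}{j}c_ia_n^{i-j}$ justifies ignoring the quadratic and higher terms modulo $p^{n+2}$, the unit property $|F'(a_n)|_p=1$ (inherited from $a_n\equiv a_0\pmod p$) allows the linear congruence to be solved at each stage, and the uniqueness step via factoring $F(a)-F(a')$ and applying the strong triangle inequality is clean.

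However, note that the paper does \emph{not} actually prove this statement: Theorem~\ref{hl} is quoted in the preliminaries as a known result, with attribution to Koblitz~\cite{29}, and is used as a black box in the proofs of Theorems~\ref{t3} and~\ref{t4a}. So there is no ``paper's own proof'' to compare against; your write-up simply supplies the classical argument that the paper elects to cite.
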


Given $a\in \Q_p$ and $r > 0$ put
$$B(a, r) = \{x\in \Q_p : |x-a|_p < r\}.$$

The $p$-adic {\it logarithm} is defined by the series
$$\log_p(x) =\log_p(1 + (x-1)) =
\sum_{n=1}^\infty (-1)^{n+1}{(x-1)^n\over n},$$
which converges for $x\in B(1, 1)$; the $p$-adic exponential is defined by
$$\exp_p(x) =\sum^\infty_{n=0}{x^n\over n!},$$
which converges for $x \in B(0, p^{-1/(p-1)})$.

\begin{lemma}\label{l1} \cite{29, 41}. Let $x\in B(0, p^{-1/(p.1)}$, then
$$|\exp_p(x)|_p = 1,\ \ |\exp_p(x)-1|_p = |x|_p, \ \ |\log_p(1 + x)|_p = |x|_p,$$
$$\log_p(\exp_p(x)) = x,\ \ \exp_p(\log_p(1 + x)) = 1 + x.$$
\end{lemma}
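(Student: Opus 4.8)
The plan is to reduce all five assertions to elementary estimates on the $p$-adic valuation $\gamma$ of the general terms of the two defining series, and then to invoke the ultrametric principle (a consequence of the strong triangle inequality recalled above) that a sum possessing a \emph{unique} term of strictly maximal norm has norm equal to that maximum. The single arithmetic input needed is Legendre's formula $\gamma(n!)=\frac{n-s_p(n)}{p-1}$, where $s_p(n)$ denotes the sum of the base-$p$ digits of $n$; note $s_p(n)\ge 1$ for $n\ge 1$. Throughout, the hypothesis $x\in B(0,p^{-1/(p-1)})$ means precisely $\gamma(x)>\frac{1}{p-1}$.

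For the exponential the general term satisfies
$$\gamma\!\left(\frac{x^n}{n!}\right)=n\gamma(x)-\frac{n-s_p(n)}{p-1}>\frac{n}{p-1}-\frac{n-s_p(n)}{p-1}=\frac{s_p(n)}{p-1}>0$$
for every $n\ge 1$, while the $n=0$ term has valuation $0$. Hence the $n=0$ term is the unique dominant one and $|\exp_p(x)|_p=1$. Subtracting it, the same computation gives $\gamma(x^n/n!)>\gamma(x)$ for $n\ge 2$ (the strictness coming from $\gamma(x)>\frac{1}{p-1}$ together with $s_p(n)\ge 1$), so the $n=1$ term $x$ now dominates and $|\exp_p(x)-1|_p=|x|_p$. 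For the logarithm the general term has valuation $n\gamma(x)-\gamma(n)$, and for $n\ge 2$ the desired inequality $(n-1)\gamma(x)>\gamma(n)$ reduces, via $\gamma(x)>\frac{1}{p-1}$, to $n-1\ge (p-1)\gamma(n)$; writing $m=\gamma(n)$ and using $n\ge p^m\ge 1+m(p-1)$ verifies this. Thus again the $n=1$ term dominates and $|\log_p(1+x)|_p=|x|_p$.

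It remains to establish the two composition identities. As formal power series over $\Q$ one has $\log(\exp(X))=X$ and $\exp(\log(1+X))=1+X$. To transfer these to genuine $p$-adic equalities I would first check that the compositions make sense on the prescribed ball: by the second norm identity $|\exp_p(x)-1|_p=|x|_p<p^{-1/(p-1)}\le 1$, so $\exp_p(x)\in B(1,1)$ and $\log_p(\exp_p(x))$ is defined; symmetrically, the third identity gives $|\log_p(1+x)|_p=|x|_p<p^{-1/(p-1)}$, so $\log_p(1+x)$ lies in the domain of $\exp_p$. One then substitutes the inner series into the outer one and rearranges the resulting double series; in a non-Archimedean field a doubly indexed family is summable as soon as its general term tends to $0$, which the valuation bounds above guarantee, so after rearrangement the coefficients collapse exactly as in the formal identity, yielding $\log_p(\exp_p(x))=x$ and $\exp_p(\log_p(1+x))=1+x$.

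The routine part is the trio of norm identities, which follow mechanically once Legendre's formula is in hand. The main obstacle is the rigorous justification of the two composition identities: one must verify that each substitution stays inside the relevant domain of convergence and, above all, that the interchange of summation in the resulting double series is legitimate in $\Q_p$. Here the decisive feature is that non-Archimedean summability is governed solely by the general term tending to zero, so once the term-by-term valuation estimates established above are available, the formal-power-series identities carry over without the delicate conditional-convergence issues one would face over $\R$ or $\C$.
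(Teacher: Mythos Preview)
Your argument is correct. The three norm identities follow exactly as you describe from Legendre's formula and the ultrametric dominant-term principle, and your justification of the two composition identities (domain checks via the norm identities already proved, then transfer of the formal power-series identities using unconditional summability in a complete non-Archimedean field) is the standard and rigorous route.

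The paper itself gives no proof of this lemma at all: it is stated with a citation to the textbooks of Koblitz and Schikhof and used as a black box. So there is nothing to compare your argument against; you have simply supplied what the paper omits, and your write-up is essentially the proof one finds in those references.
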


We refer the reader to \cite{29, 41, 48} for the basics of $p$-adic analysis and $p$-adic mathematical physics.

Let $(X,{\mathcal B})$ be a measurable space, where ${\mathcal B}$ is an algebra of subsets of $X$. A function $\mu: {\mathcal B}\to \Q_p$
is said to be a $p$-adic measure if for any $A_1, . . . ,A_n\in {\mathcal B}$ such that $A_i\cap A_j = \emptyset$, $i\ne j$, the following holds:
$$\mu(\bigcup^n_{j=1}A_j)=\sum^n_{j=1}\mu(A_j).$$
A $p$-adic measure is called a probability measure if $\mu(X) = 1$, see, e.g. \cite{22}, \cite{40}.

\subsection{Cayley tree.} The Cayley tree (Bethe lattice \cite{y1}) $\Gamma^k$
of order $ k\geq 1 $ is an infinite tree, i.e., a graph without
cycles, such that exactly $k+1$ edges originate from each vertex.
Let $\Gamma^k=(V, L)$ where $V$ is the set of vertices and  $L$ the set of edges.
Two vertices $x$ and $y$ are called {\it nearest neighbors} if there exists an
edge $l \in L$ connecting them.
We will use the notation $l=\langle x,y\rangle$.
A collection of nearest neighbor pairs $\langle x,x_1\rangle, \langle x_1,x_2\rangle,...,\langle x_{d-1},y\rangle$ is called a {\it
path} from $x$ to $y$. The distance $d(x,y)$ on the Cayley tree is the number of edges of the shortest path from $x$ to $y$.

For a fixed $x^0\in V$, called the root, we set
\begin{equation*}
W_n=\{x\in V| d(x,x^0)=n\}, \qquad V_n=\bigcup_{k=1}^n W_k
\end{equation*}
and denote
$$
S(x)=\{y\in W_{n+1} :  d(x,y)=1 \}, \ \ x\in W_n, $$ the set  of {\it direct successors} of $x$.

\subsection{Hard Core model} We consider HC model with nearest neighbor interactions on a Cayley tree
where the spins assigned to the vertices of the tree take values in the set $\Phi:=\{0,1\}$.
A configuration $\sigma$ on $A\subset V$ is then defined as a function $x\in A\mapsto\sigma (x)\in\Phi$.
The set of all configurations is $\Phi^A$. A site $x$ is called ``occupied" if $\sigma(x)=1$ and ``vacant" if $\sigma(x)=0$.

A configuration is called {\it admissible} if the product $\sigma(x)\sigma(y)=0$ for all nearest neighbor pair $\langle x,y\rangle$.
We denote $\Omega_A$ the set of all admissible configurations on $A\subset V$ and set
 $\Omega=\Omega_V$.

Let $p$ be a fixed prime number.
The (formal) $p$-adic Hamiltonian of the HC model is the mapping $H:\Omega\to \Q_p$ given by
\begin{equation}\label{h}
H(\sigma)=-J\sum_{x\in V}
\sigma(x),
\end{equation}
where $J\in \Q_p$ is a constant such that
\begin{equation}\label{J}
|J|_p< p^{-1/(p-1)}.
\end{equation}

Note that such a condition provides the existence of a $p$-adic Gibbs measure defined through the $p$-adic exponential. As it was mentioned above the set of values of a $p$-adic norm $|\cdot|_p$ is $\{p^m: m\in \Z\}$, consequently
the condition (\ref{J}) is equivalent to the condition $|J|_p\leq {1\over p}$.

\section{Construction of $p$-adic Gibbs measure}
Let us construct $p$-adic Gibbs measures of this HC model. Since we use $\exp_p(x)$
to define the $p$-adic Gibbs measure, all quantities which arise below must belong to the set:

$$\mathcal E_p=\{x\in Q_p: |x|_p=1, |x-1|_p< p^{-1/(p-1)}\}.$$

As in classical (real) case we consider a special class
of Gibbs measures. We call them $p$-adic splitting Gibbs measures, a formal
definition follows.

Write $x < y$ if the pathes from $x^0$ to $y$ go through $x$. By this notation a vertex $y$ is a direct
successor of $x$ if $y > x$ and $x, y$ are nearest neighbors. Note that the root $x^0$ has $k + 1$ direct successors and any vertex $x \ne x^0$ has $k$ direct successors.

Let $z : x\to  z_x = (z_{0,x}, z_{1,x})\in {\mathcal E}^2_p$ be a vector-valued function on $V$, we will  consider the $p$-adic probability measures on $\Omega_{V_n}$ defined by
\begin{equation}\label{e2.1a}
\mu^{(n)}(\sigma_n)={1\over Z_n}\exp_p\left(J\sum_{x\in V_n}\sigma(x)\right)\prod_{x\in W_n}z_{\sigma(x),x}
\end{equation}
where $Z_n$ is the corresponding partition function:
\begin{equation}\label{e2.1b}
Z_n =\sum_{\varphi\in \Omega_{V_n}}\exp_p\left(J\sum_{x\in V_n}\varphi(x)\right)\prod_{x\in W_n}z_{\varphi(x),x}.
\end{equation}
Let us mention that function $z$ plays the role of a generalized boundary condition.

One of the central results of probability theory concerns the construction of an infinite-volume
distribution with given finite-dimensional distributions.
In this paper we consider this problem in $p$-adic context. More
precisely, we want to define a $p$-adic probability measure $\mu$ on the set $\Omega$ of admissible configurations. In general, the existence of such a measure is not known, since there is not enough
information on the topological properties of the set of all $p$-adic measures
defined even on compact spaces.
Therefore, we can only use the $p$-adic Kolmogorov extension theorem (see \cite{16, 33}) based on the so-called compatibility condition.

We say that the $p$-adic probability measures $\mu^{(n)}$ are compatible if for all $n\geq 1$ and $\sigma_{n-1}\in \Omega_{V_{n-1}}$:
\begin{equation}\label{e2.2}
\sum_{\omega_n\in \Omega_{W_n}}\mu^{(n)}(\sigma_{n-1}\vee\omega_n){\mathbf 1}(\sigma_{n-1}\vee\omega_n\in \Omega_{V_n})= \mu^{(n-1)}(\sigma_{n-1}).
\end{equation}
where the symbol $\vee$ denotes concatenation of configurations.

This condition implies the existence of a unique
$p$-adic measure $\mu$ defined on $\Omega$ such that, for all $n$ and
$\sigma_n\in \Omega_{V_n}$, $\mu(\{\sigma|_{V_n}=\sigma_n\}) = \mu^{(n)}(\sigma_n)$.

\begin{defn} Measure $\mu$ defined by (\ref{e2.1a}), (\ref{e2.2}) is called a $p$-adic splitting (hard core) Gibbs measure, corresponding to the function $z$.
\end{defn}

The following statement describes conditions on the function $z$ that ensure compatibility of measures $\mu^{(n)}$.

\begin{pro}\label{p2.1} Probability measures $\mu^{(n)}, n = 1, 2, . . . $, in (\ref{e2.1a}) are compatible iff
for any $x \in V$ the following equation holds:
\begin{equation}\label{e2.3}
z'_x=\prod_{y\in S(x)}{\lambda+z'_y\over z'_y},
\end{equation}
here $z'_x={z_{0,x}\over z_{1,x}}$, $\lambda=\exp_p(J)$.
\end{pro}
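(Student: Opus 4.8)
The plan is to establish the compatibility condition (\ref{e2.2}) by direct computation, following the standard Gibbs-measure argument adapted to the $p$-adic setting. The key observation is that the Hamiltonian (\ref{h}) is a sum of single-site terms, so the exponential weight factorizes over vertices; this is what makes the recursion local. First I would fix $n$ and a configuration $\sigma_{n-1}\in\Omega_{V_{n-1}}$, and expand the left-hand side of (\ref{e2.2}) using the definition (\ref{e2.1a}). The sum over $\omega_n\in\Omega_{W_n}$ with the admissibility indicator ${\mathbf 1}(\sigma_{n-1}\vee\omega_n\in\Omega_{V_n})$ splits, because the tree has no cycles, into an independent product over the vertices $x\in W_{n-1}$ of local sums over the spins $\sigma(y)$, $y\in S(x)$. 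The hard-core constraint couples each $y\in S(x)$ only to its parent $x$: if $\sigma(x)=1$ then every successor $y$ must take value $0$, while if $\sigma(x)=0$ then each $y$ independently ranges over $\{0,1\}$.

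Next I would carry out this local summation at each $x\in W_{n-1}$. Writing $\lambda=\exp_p(J)$ so that the weight of an occupied successor carries a factor $\lambda$ relative to a vacant one, the sum over the admissible values of $z_{\sigma(y),y}$ for a single successor $y$ yields $z_{0,y}+\lambda z_{1,y}$ when $\sigma(x)=0$ and just $z_{0,y}$ when $\sigma(x)=1$. Taking the product over all $y\in S(x)$ and comparing with the corresponding factor $z_{\sigma(x),x}$ appearing on the right-hand side $\mu^{(n-1)}(\sigma_{n-1})$, I would match the two sides. After cancelling the common exponential factor $\exp_p(J\sum_{x\in V_{n-1}}\sigma(x))$ and the partition-function normalizations, compatibility reduces to the requirement that, for each $x$ and each value of $\sigma(x)$, the effective local weight be proportional to $z_{\sigma(x),x}$ with an $x$-dependent (but spin-independent) constant of proportionality $c_x$. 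Concretely this gives the two relations
\begin{equation}\label{plan-eq}
c_x\, z_{1,x}=\prod_{y\in S(x)}z_{0,y}, \qquad c_x\, z_{0,x}=\prod_{y\in S(x)}\left(z_{0,y}+\lambda z_{1,y}\right).
\end{equation}
Dividing the second relation by the first eliminates $c_x$ and, upon dividing numerator and denominator of each factor by $z_{1,y}$, produces exactly (\ref{e2.3}) in the ratio variables $z'_x=z_{0,x}/z_{1,x}$ and $z'_y=z_{0,y}/z_{1,y}$. The converse direction is immediate: given (\ref{e2.3}), one defines $c_x$ by the first equation in (\ref{plan-eq}), runs the computation backwards, and verifies (\ref{e2.2}) holds, whence the measures are compatible.

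The main obstacle I anticipate is not the combinatorics, which mirror the real case treated in \cite{SR}, but ensuring that every manipulation is legitimate over $\Q_p$. In particular, the division by $z_{1,y}$ and by the constants $c_x$ must be justified; this is exactly why the construction confines all relevant quantities to $\mathcal E_p$, where $|z_{i,x}|_p=1$ guarantees invertibility, and why the hypothesis (\ref{J}) forces $\lambda=\exp_p(J)$ to be well-defined with $|\lambda|_p=1$ via Lemma \ref{l1}. I would therefore take care, at each cancellation step, to check that the partition functions $Z_n$ and the weights $z_{\sigma(x),x}$ are nonzero $p$-adic numbers, so that no spurious denominators vanish; once this is in place, the algebraic identity (\ref{e2.3}) follows with no further difficulty.
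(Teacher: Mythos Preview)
Your proposal is correct and follows essentially the same approach the paper has in mind: the paper's own proof is simply a one-line reference to Proposition~2.1 of \cite{SR}, stating that one checks condition~(\ref{e2.2}) for the measures~(\ref{e2.1a}) exactly as in the real case. Your plan carries out precisely that verification---factorizing the sum over $\omega_n$ along the tree, performing the local hard-core sum at each $x\in W_{n-1}$, and dividing the two resulting proportionality relations to obtain~(\ref{e2.3})---while adding the appropriate remark that invertibility of the $z_{i,x}$ and well-definedness of $\lambda=\exp_p(J)$ are guaranteed by the standing assumption that all quantities lie in~$\mathcal E_p$.
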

\begin{proof} The proof consists in checking condition (\ref{e2.2}) for the measures (\ref{e2.1a}). It is analogous to that of Proposition 2.1 in \cite{SR}.
\end{proof}

Without loss of generality, we set hereafter $z_{1,x}=1$ and $z_x
= z'_x=z_{0,x}\in {\mathcal E}_p$. Then condition (\ref{e2.3})
reads
\begin{equation}\label{e3.1}
z_x=\prod_{y\in S(x)}{\lambda+z_y\over z_y}.
\end{equation}

\section{Conditions of solvability of equation (\ref{e3.1})}

In this section, we examine the conditions on the parameters $k\geq 1$, $p$ and $\lambda$ for the existence of solutions of equation (\ref{e3.1}). Notice that by Lemma \ref{l1} we have $\lambda=\exp_p(J)\in {\mathcal E}_p$.

\begin{thm}\label{t2} If $p$ does not divide $2^k-1$, then the equation (\ref{e3.1}) has no solution $z_x\in {\mathcal E}_p$, $x\in V$.
\end{thm}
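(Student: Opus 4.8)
The plan is to reduce the functional equation (\ref{e3.1}) to the residue field $\mathbb{F}_p = \mathbb{Z}_p/p\mathbb{Z}_p$ and show that this reduction forces the congruence $2^k \equiv 1 \pmod p$; taking the contrapositive then yields the claim. The key preliminary observation is that every element of $\mathcal{E}_p$ reduces to $1$: if $z \in \mathcal{E}_p$ then $|z|_p = 1$, so $z$ is a $p$-adic unit, and $|z-1|_p < p^{-1/(p-1)} < 1$ forces $z \equiv 1 \pmod p$. Since $\lambda = \exp_p(J) \in \mathcal{E}_p$ by Lemma \ref{l1}, we also have $\lambda \equiv 1 \pmod p$.

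Next I would fix any non-root vertex $x \in V$, so that $S(x)$ consists of exactly $k$ elements, and assume for contradiction that a solution $(z_x)_{x \in V}$ with all $z_x \in \mathcal{E}_p$ exists. For each $y \in S(x)$ the factor $(\lambda + z_y)/z_y$ is a $p$-adic integer, because $|z_y|_p = 1$ and $|\lambda + z_y|_p \le \max\{|\lambda|_p, |z_y|_p\} = 1$, so it admits a reduction modulo $p$. Using $\lambda \equiv z_y \equiv 1 \pmod p$, this reduction equals $(1+1)/1 = 2$ in $\mathbb{F}_p$. Taking the product over the $k$ direct successors, the right-hand side of (\ref{e3.1}) reduces to $2^k$, while the left-hand side reduces to $\overline{z_x} = 1$. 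Hence $2^k \equiv 1 \pmod p$, that is, $p \mid 2^k - 1$. The contrapositive of this statement is exactly the theorem: if $p \nmid 2^k - 1$, then no family $(z_x)_{x \in V} \subset \mathcal{E}_p$ can satisfy (\ref{e3.1}).

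The step requiring the most care is the passage to the residue field when $p = 2$. There $\overline{2} = 0$ in $\mathbb{F}_2$, so each factor $(\lambda + z_y)/z_y$ in fact has norm strictly less than $1$ and reduces to $0$; the product then reduces to $0 \ne 1 = \overline{z_x}$, which is consistent with $2 \nmid 2^k - 1$ (the integer $2^k - 1$ being always odd). Thus I would stress that one must verify the factors remain $p$-adic integers — guaranteed by the bound $|\lambda + z_y|_p \le 1$ above — so that the reduction argument is valid uniformly in $p$, rather than treating $p = 2$ as an exceptional case handled separately through the $p$-adic norm. Everything else in the argument is a routine computation in $\mathbb{F}_p$.
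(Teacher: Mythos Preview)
Your proof is correct. Both your argument and the paper's ultimately hinge on the same observation---that reducing equation (\ref{e3.1}) modulo $p$ forces $2^k\equiv 1\pmod p$---but the executions differ. The paper computes $|z_x-1|_p$ directly: it expands $\prod_{i=1}^k(\lambda+z_{x_i})-\prod_{i=1}^k z_{x_i}$ into a sum of terms of the form $(\text{product of elements of }\mathcal E_p)-1$ plus the integer $2^k-1$, and then invokes Lemma~\ref{l2} to bound each such term by $p^{-1/(p-1)}$, leaving $|2^k-1|_p$ as the dominant contribution when $p\nmid 2^k-1$. Your route bypasses this combinatorial expansion entirely by applying the reduction homomorphism $\mathbb Z_p\to\mathbb F_p$ to each factor $(\lambda+z_y)/z_y$, which is cleaner and makes the role of the residue field transparent. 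What the paper's longer calculation buys is a two-sided statement: it shows not only that $p\nmid 2^k-1$ obstructs $z_x\in\mathcal E_p$, but also that $p\mid 2^k-1$ genuinely yields $|z_x-1|_p<p^{-1/(p-1)}$, a converse direction that is useful later but not required for Theorem~\ref{t2} itself. Your care in fixing a \emph{non-root} vertex so that $|S(x)|=k$ is also a point the paper leaves implicit.
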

\begin{proof} Let $z_x\in {\mathcal E}_p$, $x\in V$ be a solution then from (\ref{e3.1}) we get
 $$
|z_x|_p=\prod_{y\in S(x)}\left|{\lambda+z_y\over z_y}\right|_p=\prod_{y\in S(x)}|\lambda+z_y|_p=
$$
$$\prod_{y\in S(x)}|\lambda-1+z_y-1+2|_p=
\left\{\begin{array}{ll}
1, \ \ \ \ \ \ \mbox{if} \ \ p\ne 2,\\[3mm]
<p^{-k/(p-1)}, \ \ \mbox{if} \ \ p=2.\\
\end{array}\right.
$$

We shall use the following (see Lemma 4.6 of \cite{36})

\begin{lemma}\label{l2} If $a_i\in  {\mathcal E}_p$ for all $i=1,\dots,m$. Then
$$\prod_{i=1}^ma_i\in {\mathcal E}_p.$$
\end{lemma}

Assume $S(x)=\{x_1,\dots,x_k\}$ then from (\ref{e3.1}) we get
 $$
|z_x-1|_p=\left|\prod_{i=1}^k{\lambda+z_{x_i}\over z_{x_i}}-1\right|_p=$$
$$\left|\prod_{i=1}^k(\lambda+z_{x_i})-\prod_{i=1}^kz_{x_i}\right|_p=
$$
$$\left|\lambda\sum_{i=1}^k\prod_{j=1\atop j\ne i}^kz_{x_j}+\lambda^2\sum_{i=1}^k\sum_{j=1\atop j\ne i}^k\prod_{q=1\atop q\ne i,j}^kz_{x_q}+\dots+ \lambda^{k-1}\sum_{i=1}^kz_{x_i}+\lambda^k\right|_p=$$
$$\left|\sum_{i=1}^k\left(\lambda\prod_{j=1\atop j\ne i}^kz_{x_j}-1\right)+\sum_{i=1}^k\sum_{j=1\atop j\ne i}^k\left(\lambda^2\prod_{q=1\atop q\ne i,j}^kz_{x_q}-1\right)+\dots\right.$$
\begin{equation}\label{e2}
\left.+ \sum_{i=1}^k\left(\lambda^{k-1}z_{x_i}-1\right)+
\left(\lambda^k-1\right)+\left(2^k-1\right)\right|_p.
\end{equation}
Now using Lemma \ref{l2}, we get from (\ref{e2})

$$ \mbox{RHS\ \ of \ \ (\ref{e2})}=
\left\{\begin{array}{ll}
1, \ \ \ \ \ \ \mbox{if} \ \ p \nmid 2^k-1,\\[3mm]
< p^{-1/(p-1)}, \ \ \mbox{if} \ \ p\mid 2^k-1.\\
\end{array}\right.
$$
Thus the solution $z_x$ belongs to ${\mathcal E}_p$ iff $p$ divides $2^k-1$.
\end{proof}
Using this theorem, for given $k\geq 1$, one can find values of the prime number $p$ for which the equation (\ref{e3.1}) may have solutions (see the following table for small values of $k$).\\

\begin{center}
\begin{tabular}{|l|l|}
\hline
 $k$ & $p$ \\
 \hline
$1$ & $\emptyset$\\
\hline
$2$ & 3\\
\hline
$3$ & 7\\
\hline
$4$ & 3, 5\\
\hline
$5$& 31\\
\hline
$6$& 3, 7\\
\hline
$7$& 127\\
\hline
$8$& 3, 5, 17\\
\hline
$9$& 7, 73\\
\hline
$10$& 3, 11, 31\\
\hline
\end{tabular}
\end{center}
\bigskip

The following theorem gives a sufficient condition for the existence of a solution.

\begin{thm}\label{t3} If  $p$ divides $2^k-1$ and $p$ does not divide $k+2$ then the equation (\ref{e3.1}) has at least one solution $z_x\in {\mathcal E}_p$, $x\in V$.
\end{thm}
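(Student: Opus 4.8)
The plan is to produce a translation-invariant solution, that is, a constant function $z_x\equiv z$ on $V$. Since every non-root vertex has exactly $k$ direct successors, inserting $z_x\equiv z$ into (\ref{e3.1}) collapses the whole system to the single scalar equation $z=\left(\tfrac{\lambda+z}{z}\right)^k$, which is equivalent to finding a root of the polynomial
$$F(z)=z^{k+1}-(\lambda+z)^k.$$
Because $\lambda\in{\mathcal E}_p\subset\Z_p$, the coefficients of $F$ are $p$-adic integers, so I would locate the root by Hensel's lemma (Theorem \ref{hl}), applied to $F$ with the approximate root $a_0=1$.

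First I would record two elementary facts. Since $2^k-1$ is odd, the hypothesis $p\mid 2^k-1$ forces $p$ to be odd; hence $2$ is invertible modulo $p$ and $2^k\equiv 1\pmod p$. Also $\lambda\in{\mathcal E}_p$ means $|\lambda-1|_p<p^{-1/(p-1)}$, and since the values of $|\cdot|_p$ are powers of $p$ and $1/(p-1)<1$ (because $p\ge 3$), this forces $|\lambda-1|_p\le p^{-1}$, i.e. $\lambda\equiv 1\pmod p$.

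Next I would verify the two Hensel conditions at $a_0=1$. For the value, using $\lambda\equiv 1$ and $2^k\equiv 1$,
$$F(1)=1-(\lambda+1)^k\equiv 1-2^k\equiv 0\pmod p.$$
For the derivative, $F'(z)=(k+1)z^k-k(\lambda+z)^{k-1}$, so $F'(1)\equiv(k+1)-k\,2^{k-1}\pmod p$. The decisive computation is to clear the inverse of $2$: multiplying by $2$ and using $2^k\equiv 1$ gives
$$2F'(1)\equiv 2(k+1)-k\,2^k\equiv k+2\pmod p.$$
As $p$ is odd, $F'(1)\not\equiv 0\pmod p$ is therefore equivalent to $p\nmid k+2$, which is exactly the standing hypothesis. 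Hence Hensel's lemma yields a unique $z\in\Z_p$ with $F(z)=0$ and $z\equiv 1\pmod p$.

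It remains to check that this root lies in ${\mathcal E}_p$, which is where one uses $p\ge 3$ once more. From $z\equiv 1\pmod p$ we get $|z-1|_p\le p^{-1}<p^{-1/(p-1)}$, the last inequality holding because $p-1>1$; the strong triangle inequality then gives $|z|_p=|1+(z-1)|_p=1$. Thus $z\in{\mathcal E}_p$, and setting $z_x\equiv z$ gives the desired translation-invariant solution of (\ref{e3.1}). I expect the only genuine obstacle to be the derivative step: one must see that the factor $2^{k-1}$ recombines with $2^k\equiv 1$ so that the nonvanishing of $F'(1)$ modulo $p$ becomes precisely the condition $p\nmid k+2$; the value condition and the verification $z\in{\mathcal E}_p$ are routine.
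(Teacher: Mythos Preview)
Your proof is correct and follows essentially the same route as the paper: seek a constant solution $z_x\equiv z$, reduce to $F(z)=z^{k+1}-(\lambda+z)^k=0$, and apply Hensel's lemma at $a_0=1$, checking $F(1)\equiv 0$ via $p\mid 2^k-1$ and $F'(1)\not\equiv 0$ via $p\nmid k+2$. Your argument is in fact a bit more streamlined than the paper's, since you observe directly that $\lambda\equiv 1\pmod p$ (and $p$ odd) and then compute $2F'(1)\equiv k+2$, whereas the paper expands $(\lambda+1)^{k-1}=((\lambda-1)+2)^{k-1}$ binomially to reach the same conclusion.
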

\begin{proof} We shall prove that under conditions of theorem the equation (\ref{e3.1}) has a constant (translational-invariant) solution $z_x=z$, $\forall x\in V$. In this case from (\ref{e3.1}) we get
\begin{equation}\label{e3}
z=\left({\lambda+z\over z}\right)^k.
\end{equation}
This equation can be written as $F(z)=0$ with
$F(z)=z^{k+1}-(\lambda+z)^k.$
Since $|\lambda|_p=1$ we have that the polynomial $F(z)$ has only $p$-adic integer coefficients. Hence we shall check other conditions of Hensel's lemma.
Take $a_0=1$ then we have
$$
F(1)=-(\lambda^k+k\lambda^{k-1}+\dots+k\lambda)=-\left((\lambda^k-1)+k(\lambda^{k-1}-1)+\dots+k(\lambda-1)+(2^k-1)\right).
$$
 Since $|\lambda-1|_p\leq {1\over p}$, using Lemma \ref{l2} we get $|\lambda^m-1|_p\leq {1\over p}$
for each $m=1,2,...$\\
This means that $p$ divides all $\lambda^m-1$ and also divides $2^k-1$.
Consequently $p$ divides $F(1)$, i.e. $F(1)\equiv 0\,(\operatorname{mod} p)$. Now let us check that if $p$ does not divide $k+2$, then $F'(1)\neq 0\,(\operatorname{mod} p)$.
We have
$$
F'(1)=k+1-k(\lambda+1)^{k-1}=k+1-k\left((\lambda-1)+2\right)^{k-1}=
$$
 $$k+1-k\left((\lambda-1)^{k-1}+(k-1)(\lambda-1)^{k-2}2+\dots+(k-1)(\lambda-1)2^{k-2}+2^{k-1}\right).$$
 Since $p$ divides $\lambda-1$ we must have $p\nmid (k+1-k2^{k-1})$. Using $p|2^k-1$ one can see that
 $p\nmid (k+1-k2^{k-1})$ is equivalent to $p\nmid (k+2)$. Thus conditions of Hensel's lemma are satisfied for
 $F(z)$ hence there exists a unique $p$-adic integer $a$ such that
 $F(a)=0$ and $a\equiv a_0\,(\operatorname{mod} p)$, i.e. $F(z)=0$ has a solution $z=a$. Since $a_0=1$ and $a\equiv 1\,(\operatorname{mod} p)$ we conclude $a\in {\mathcal E}_p$.
This proves the theorem.
\end{proof}
As a corollary of this Theorem, we have the following

\begin{thm}\label{t3a} If  $p$ divides $2^k-1$ and $p$ does not divide $k+2$ then for the $p$-adic HC model on Cayley tree of order $k\geq 1$ there exists at least one $p$-adic (splitting) Gibbs measure.
\end{thm}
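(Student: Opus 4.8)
The plan is to read the statement off directly from Theorem~\ref{t3} together with the construction of Section~3, so that Theorem~\ref{t3a} becomes a genuine corollary rather than a fresh argument. By Theorem~\ref{t3}, under the hypotheses $p\mid 2^k-1$ and $p\nmid k+2$ the equation (\ref{e3.1}) admits a constant (translational-invariant) solution $z_x=z=a$ for every $x\in V$, where $a$ is the $p$-adic integer produced by Hensel's lemma. Crucially, the proof of that theorem shows $a\equiv 1\,(\operatorname{mod}p)$, so that $a\in {\mathcal E}_p$; this membership is exactly what will make the rest of the argument go through.

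First I would promote this scalar solution to an admissible generalized boundary function. Setting $z_{1,x}=1$ and $z_{0,x}=a$ for every $x\in V$ defines a vector-valued function $z:x\mapsto z_x=(z_{0,x},z_{1,x})\in {\mathcal E}^2_p$, where $1\in {\mathcal E}_p$ trivially since $|1|_p=1$ and $|1-1|_p=0$. Because $a$ solves (\ref{e3.1}) and the solution is constant along the tree, the associated ratio $z'_x=z_{0,x}/z_{1,x}=a$ satisfies equation (\ref{e2.3}) of Proposition~\ref{p2.1} at every vertex $x$.

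Next I would invoke Proposition~\ref{p2.1}: since (\ref{e2.3}) holds for $z$, the family of $p$-adic probability measures $\mu^{(n)}$, $n=1,2,\dots$, defined by (\ref{e2.1a}) is compatible in the sense of (\ref{e2.2}). Here it is essential that every quantity entering (\ref{e2.1a}) lie in ${\mathcal E}_p$: the factors $z_{\sigma(x),x}$ do so by construction, and the weight $\exp_p\!\left(J\sum_{x\in V_n}\sigma(x)\right)$ is well defined and of norm $1$ because $|J|_p\le p^{-1/(p-1)}$, so that the $\mu^{(n)}$ are honestly $\Q_p$-valued. Finally, the $p$-adic Kolmogorov extension theorem recalled in Section~3 converts the compatible family $\{\mu^{(n)}\}$ into a unique $p$-adic measure $\mu$ on $\Omega$ satisfying $\mu(\{\sigma|_{V_n}=\sigma_n\})=\mu^{(n)}(\sigma_n)$ for all $n$, which is by definition a $p$-adic splitting Gibbs measure; this yields the asserted existence of at least one such measure (the translational-invariant one).

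There is essentially no analytic obstacle beyond bookkeeping, since the whole difficulty was already absorbed into Theorem~\ref{t3}, whose purpose is precisely to guarantee a solution of (\ref{e3.1}) lying in ${\mathcal E}_p$ rather than merely in $\Q_p$. The only point deserving care is verifying that the constructed $z$ takes values in ${\mathcal E}^2_p$, because this membership is exactly what makes the measures (\ref{e2.1a}) well defined through $\exp_p$ and what licenses the application of Proposition~\ref{p2.1}; it is supplied by the congruence $a\equiv 1\,(\operatorname{mod}p)$ established in the proof of Theorem~\ref{t3}.
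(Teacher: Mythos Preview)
Your proposal is correct and follows exactly the paper's approach: the paper states Theorem~\ref{t3a} as an immediate corollary of Theorem~\ref{t3} without further proof, and you have simply spelled out the straightforward mechanism (solution of (\ref{e3.1}) in ${\mathcal E}_p$ $\Rightarrow$ compatibility via Proposition~\ref{p2.1} $\Rightarrow$ Gibbs measure via Kolmogorov extension) that the paper leaves implicit.
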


\section{Uniqueness of translational-invariant measure}

In the previous section under conditions of Theorem \ref{t3} we have shown that equation (\ref{e3})
has at least one solution. Consequently by Proposition \ref{p2.1} there exists at least one translational invariant $p$-adic Gibbs measure. The following theorem asserts that such a measure is unique.

  \begin{thm}\label{t4} If  $p$ divides $2^k-1$ and $p$ does not divide $k+2$ then there exists a unique translational invariant $p$-adic Gibbs measure.
\end{thm}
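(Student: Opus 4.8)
The plan is to prove that equation (\ref{e3}) has a unique solution in $\mathcal{E}_p$, since by Proposition \ref{p2.1} every translational invariant $p$-adic Gibbs measure corresponds exactly to a constant solution $z_x \equiv z$ of (\ref{e3.1}), which reduces to $F(z)=0$ with $F(z)=z^{k+1}-(\lambda+z)^k$. Theorem \ref{t3} already furnishes one such solution via Hensel's lemma; what remains is uniqueness. The key observation is that Hensel's lemma itself delivers uniqueness, but only among $p$-adic integers congruent to the chosen $a_0=1$ modulo $p$. So the heart of the argument is to show that \emph{any} root $z\in\mathcal{E}_p$ is automatically forced into that residue class and hence must coincide with the Hensel root $a$.

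**First I would show that every solution $z\in\mathcal{E}_p$ satisfies $z\equiv 1\,(\operatorname{mod} p)$.** This is immediate from the very definition of $\mathcal{E}_p$: if $z\in\mathcal{E}_p$ then $|z-1|_p < p^{-1/(p-1)}\le 1/p$, so $p$ divides $z-1$, i.e. $z\equiv 1\,(\operatorname{mod} p)$. In particular any solution lies in $\mathbb{Z}_p$ and reduces to $1$ modulo $p$. Combined with the computation in Theorem \ref{t3} that $F(1)\equiv 0$ and $F'(1)\not\equiv 0\,(\operatorname{mod} p)$ (the latter using $p\mid 2^k-1$ and $p\nmid k+2$), the uniqueness clause of Hensel's lemma (Theorem \ref{hl}) applies: there is exactly one $p$-adic integer $a$ with $F(a)=0$ and $a\equiv 1\,(\operatorname{mod} p)$. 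Therefore any solution $z\in\mathcal{E}_p$ equals this $a$, giving uniqueness of the constant solution.

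**Next I would translate solution-uniqueness back into measure-uniqueness.** By the reduction preceding (\ref{e3.1}) we have normalized $z_{1,x}=1$, so a translational invariant Gibbs measure is determined by the single scalar $z\in\mathcal{E}_p$ solving (\ref{e3}); since that scalar is unique, so is the associated boundary function $z_x\equiv a$, and hence—through the construction (\ref{e2.1a}) and the compatibility guaranteed by Proposition \ref{p2.1}—the resulting $p$-adic splitting Gibbs measure $\mu$ is unique.

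**The main obstacle I anticipate is the passage from "Hensel uniqueness in a fixed residue class" to "uniqueness among all of $\mathcal{E}_p$".** Hensel's lemma as stated only guarantees uniqueness of the root congruent to the seed $a_0=1$; a priori there could be other roots of $F$ in $\mathbb{Z}_p$ lying in different residue classes that nevertheless happen to fall in $\mathcal{E}_p$. The resolution is precisely the elementary but essential remark above, that membership in $\mathcal{E}_p$ forces the residue $z\equiv 1\,(\operatorname{mod} p)$, collapsing $\mathcal{E}_p$ into the single Hensel class. I would make sure to state this implication explicitly rather than leaving it implicit, since it is the logical linchpin: without it one has only local uniqueness near $a_0=1$, not global uniqueness in $\mathcal{E}_p$.
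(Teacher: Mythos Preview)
Your argument is correct and takes a genuinely different route from the paper's. The paper assumes two distinct roots $a,b\in\mathcal{E}_p$, factors $F(a)-F(b)=(a-b)\,G(a,b)$, and then rewrites the resulting equation $G(a,b)=0$ so that one side has $p$-adic norm $\le 1/p$ while the other equals $k(2^k-1)-(k+2)$, which has norm $1$ under the hypotheses---a contradiction reached through a fairly laborious expansion. You instead note that membership in $\mathcal{E}_p$ already forces $z\equiv 1\pmod p$ (and $z\in\mathbb{Z}_p$), so the uniqueness clause of Hensel's lemma, whose hypotheses at the seed $a_0=1$ were verified in Theorem~\ref{t3}, immediately yields at most one root of $F$ in $\mathcal{E}_p$. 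Your approach is shorter, recycles the computation $F'(1)\not\equiv 0\pmod p$ from Theorem~\ref{t3}, and makes transparent exactly where each hypothesis is used; the paper's approach has the mild advantage of being self-contained and not relying on the statement of Hensel's lemma including a uniqueness clause.

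One small correction: the chain ``$|z-1|_p < p^{-1/(p-1)}\le 1/p$'' is false for odd $p$ (in fact $p^{-1/(p-1)}>1/p$ whenever $p\ge 3$, and the hypothesis $p\mid 2^k-1$ forces $p$ odd). The conclusion $|z-1|_p\le 1/p$ is nevertheless correct, because the $p$-adic norm takes values only in $\{p^m:m\in\mathbb Z\}\cup\{0\}$ and the largest such value strictly below $p^{-1/(p-1)}$ is $p^{-1}$; simply rephrase that step via discreteness of the valuation.
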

\begin{proof} We shall prove that the equation (\ref{e3})
has a unique solution $z=a\in {\mathcal E}_p$. Assume that there are two such solutions $a$ and $b$, $a\ne b$.
Then we have $F(a)=F(b)=0$. Hence
$$F(a)-F(b)=(a-b)\left((a^k+a^{k-1}b+\dots+b^k)-\right.$$ $$\left.((a+\lambda)^{k-1}+(a+\lambda)^{k-2}(b+\lambda)+\dots+(b+\lambda)^{k-1})\right)=0.$$
Since $a\ne b$ we get
$$a^k+\dots+b^k=(a+\lambda)^{k-1}+\dots+(b+\lambda)^{k-1}$$ which can be written
$$(a^k-1)+\dots+(b^k-1)+k+1=[(a-1)+(\lambda-1)]^{k-1}+\dots+(k-1)[(a-1)+(\lambda-1)]2^{k-2}+$$ $$[(b-1)+(\lambda-1)]^{k-1}+\dots+(k-1)[(b-1)+(\lambda-1)]2^{k-2}+
k2^{k-1}.$$ Consequently,
$$2(a^k-1)+\dots+2(b^k-1)-2[(a-1)+(\lambda-1)]^{k-1}-\dots-2(k-1)[(a-1)+(\lambda-1)]2^{k-2}-$$ $$2[(b-1)+(\lambda-1)]^{k-1}-\dots-2(k-1)[(b-1)+(\lambda-1)]2^{k-2}=k(2^k-1)-(k+2).$$
This equality is not satisfied for any $a,b\in {\mathcal E}_p$ since the $p$-adic norm of its LHS is $\leq {1\over p}$ while the $p$-adic norm of the RHS is 1.
\end{proof}

\section{Periodic $p$-adic measures}

In this section, we shall consider periodic measures and use the
group structure of the Cayley tree. It is known (see \cite{yp1})
that there exists a one-to-one correspondence between the set of
vertices $V$ of a Cayley tree of order $k\geq 1$ and the group
$G_k$, free product of $k+1$ second-order cyclic groups with
generators $a_1, a_2, . . . , a_{k+1}$.

\begin{defn} Let ${\tilde G}$ be a normal subgroup of the group $G_k$. The set $z = \{z_x: x\in G_k\}$
 is said to be ${\tilde G}$-periodic if $z_{yx} =z_x$ for any $x\in G_k$ and $y\in {\tilde G}$.
 \end{defn}

\begin{defn} The ($p$-adic) Gibbs measure corresponding to a ${\tilde G}$-periodic set of quantities $z$ is said to be ${\tilde G}$-periodic.
\end{defn}

It is easy to see that a $G_k$-periodic measure is translational invariant.
Denote

$$G^{(2)} = \{x\in G_k: \, \mbox{the length of word} \, x \, \mbox{is even}\}.$$
This set is a normal subgroup of index two \cite{yp1,yp2}.

The following theorem characterizes the set of all periodic measures.

\begin{thm}\label{ty} Let $\tilde{G}$ be a normal subgroup of finite index in $G_k$.
Then each $\tilde{G}$- periodic $p$-adic Gibbs measure for HC model is either translation-invariant or $G^{(2)}$-
periodic.
\end{thm}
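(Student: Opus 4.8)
The plan is to translate the problem into a statement about functions on the finite quotient group $Q=G_k/\tilde G$ and then to exploit the consistency of the functional equation (\ref{e3.1}) across different representatives of the same coset. Since $\tilde G$ is normal, a $\tilde G$-periodic set $z$ is constant on cosets, so it descends to a function $q\mapsto z_q$ on $Q$; write $\pi\colon G_k\to Q$ for the quotient map and $b_i=\pi(a_i)$, so that each $b_i$ is an involution and the $b_i$ generate $Q$. For a non-root vertex $x$ with last letter $a_l$ one has $S(x)=\{xa_i:\ i\neq l\}$ and $\pi(xa_i)=\pi(x)b_i$, so (\ref{e3.1}) reads $z_q=\prod_{i\neq l}\frac{\lambda+z_{qb_i}}{z_{qb_i}}$ with $q=\pi(x)$. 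The crucial point is that the left-hand side depends only on $q$, while the right-hand side also involves the index $l$ of the last letter; this forces a compatibility relation whenever a coset is represented by words with different last letters.

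Next I would extract this relation. Given a coset $q$ and two non-root vertices in it with last letters $a_l$ and $a_{l'}$, equating the two product expressions for $z_q$ and cancelling the factors common to both (all indices $i\neq l,l'$; each factor is a nonzero element of $\Q_p$, hence invertible) leaves $\frac{\lambda+z_{qb_{l'}}}{z_{qb_{l'}}}=\frac{\lambda+z_{qb_l}}{z_{qb_l}}$. Since $\lambda=\exp_p(J)\neq 0$, the map $t\mapsto(\lambda+t)/t$ is injective, so $z_{qb_l}=z_{qb_{l'}}$. I would then establish the auxiliary combinatorial fact that, for $k\geq 2$, every generator occurs as the last letter of some non-root word in every coset; granting this, the relation yields $z_{qb_1}=\cdots=z_{qb_{k+1}}$, i.e. all $Q$-neighbours of a given coset carry one common value $w(q)$. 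Proving this realizability of all last letters at all cosets, via connectivity of the Cayley graph of $Q$ together with the presence of at least three generators, is the step I expect to require the most care, and where the degenerate possibilities $b_i=e$ or $b_i=b_j$ must be checked to be harmless.

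Finally I would conclude. For any $x$ with $q=\pi(x)$, the coset $q$ is itself a $Q$-neighbour of $qb_i$ (because $b_i^2=e$), so $z_x=z_q=w(qb_i)$; on the other hand $z_{xa_ia_j}=z_{qb_ib_j}=w(qb_i)$ as well, since $qb_ib_j$ is a neighbour of $qb_i$. Hence $z_{xa_ia_j}=z_x$ for all $i,j$, so $z$ is invariant under right multiplication by every product $a_ia_j$, and these products generate $G^{(2)}$. As $G^{(2)}$ has index two, $z$ is constant on each of its two cosets, that is, $z_x$ depends only on the parity of the length of $x$; this is exactly $G^{(2)}$-periodicity. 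It degenerates to a translation-invariant (constant) solution precisely when the two parity-values coincide, which is forced when $\tilde G\not\subseteq G^{(2)}$ (then $\tilde G$ contains an odd-length element, and $\tilde G$-periodicity identifies the two values). By Proposition \ref{p2.1} the corresponding measures are therefore translation-invariant or $G^{(2)}$-periodic, as claimed.
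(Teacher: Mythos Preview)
Your proposal is correct and takes essentially the same approach as the paper: the paper's proof consists of observing that $f(z)=(\lambda+z)/z$ is injective and then invoking the argument of Theorem~2 in \cite{MRS}, which is precisely the coset/last-letter consistency argument you outline. The combinatorial step you flag as requiring care---realizing every generator as a last letter in every coset---is indeed the technical core handled in \cite{MRS}.
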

\begin{proof} Denote $f(z)={\lambda+z\over z}$. It is easy to check that $f(z)= f(t)$ if and only if $z = t$.
This property together with arguments similar to the ones given in the proof of Theorem 2 in \cite{MRS} lead to the statement.
\end{proof}

Let $\tilde{G}$ be a normal subgroup of finite index in $G_k$.
Let us state condition on $\tilde{G}$ for each $\tilde{G}$-periodic $p$-adic Gibbs measure to be translation invariant.\\
Set $I(\tilde{G}) = \tilde{G}\cap\{a_1, ..., a_{k+1}\}$, where the $a_i$ are the generators of $G_k$.

\begin{thm}\label{ty1} If $I(\tilde{G})\ne\emptyset$ then each $\tilde{G}$- periodic $p$-adic Gibbs measure is translational-invariant.
\end{thm}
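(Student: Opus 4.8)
The statement to prove is Theorem~\ref{ty1}: if $I(\tilde G)=\tilde G\cap\{a_1,\dots,a_{k+1}\}\ne\emptyset$, then every $\tilde G$-periodic $p$-adic Gibbs measure is translation-invariant. The strategy is to leverage Theorem~\ref{ty}, which has already reduced the possibilities drastically: any $\tilde G$-periodic measure is either translation-invariant or $G^{(2)}$-periodic. So it suffices to show that the hypothesis $I(\tilde G)\ne\emptyset$ rules out the genuinely $G^{(2)}$-periodic (non-translation-invariant) case, i.e. that under this hypothesis a $G^{(2)}$-periodic solution of~\eqref{e3.1} must actually be constant.

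**Key steps.** First I would unwind what a $G^{(2)}$-periodic set $z=\{z_x\}$ looks like concretely. Since $G^{(2)}$ is the subgroup of even-length words and has index two, a $G^{(2)}$-periodic configuration takes only two values: one value, say $z^{(0)}$, on all even-length words, and another value $z^{(1)}$ on all odd-length words. Writing out the compatibility equation~\eqref{e3.1} for such a configuration, and using that every direct successor $y\in S(x)$ of a vertex $x$ of one parity has the opposite parity, gives the coupled system
\begin{equation*}
z^{(0)}=\left(\frac{\lambda+z^{(1)}}{z^{(1)}}\right)^{k},\qquad
z^{(1)}=\left(\frac{\lambda+z^{(0)}}{z^{(0)}}\right)^{k}.
\end{equation*}
The measure is translation-invariant precisely when $z^{(0)}=z^{(1)}$, so the task becomes: show that $I(\tilde G)\ne\emptyset$ forces $z^{(0)}=z^{(1)}$.

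**The core argument.** Here is where the hypothesis enters. Suppose $a_i\in I(\tilde G)$ for some generator $a_i$, so $a_i\in\tilde G$. Pick a vertex $x$ and consider the neighboring vertex obtained by the generator $a_i$; these two vertices have opposite parity (since $a_i$ is a single generator, a word of odd length). On the other hand, $\tilde G$-periodicity asserts $z_{yx}=z_x$ for all $y\in\tilde G$, and in particular for $y=a_i$. Thus $z_{a_i x}=z_x$ for every $x$. Translating this into the two-value description: $z_{a_i x}$ and $z_x$ sit at vertices of opposite parity, yet $\tilde G$-periodicity equates them. Running this over vertices of both parities forces $z^{(0)}=z^{(1)}$, i.e. the configuration is constant, hence translation-invariant by the remark that a $G_k$-periodic measure is translation-invariant. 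The main obstacle, and the step deserving the most care, is the bookkeeping that tying $z^{(0)}=z^{(1)}$ together consistently across the whole tree: one must verify that the single relation $a_i\in\tilde G$ genuinely propagates to identify the two parity classes everywhere, using both the normality of $\tilde G$ and the group/tree correspondence, rather than merely at one edge. Once that identification is established, the conclusion is immediate from Theorem~\ref{ty}.
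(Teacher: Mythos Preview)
Your argument is correct. The paper itself does not spell out a proof but simply writes ``Similar to proof of Theorem 3 in \cite{MRS},'' so there is little to compare against directly; your route via Theorem~\ref{ty} is a clean and legitimate way to fill in the details.

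One remark: the ``main obstacle'' you flag at the end is not actually an obstacle once Theorem~\ref{ty} has been invoked. If the $\tilde G$-periodic function $z$ is not constant, Theorem~\ref{ty} already tells you it is $G^{(2)}$-periodic, hence takes exactly the two values $z^{(0)},z^{(1)}$ on the two parity classes. A \emph{single} application of $\tilde G$-periodicity at any vertex $x$ then gives $z_{a_ix}=z_x$ with $a_ix$ and $x$ of opposite parity, so $z^{(0)}=z^{(1)}$ immediately---no propagation or normality argument is needed beyond what is already absorbed into Theorem~\ref{ty}. You can safely drop that caveat.
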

\begin{proof} Similar to proof of Theorem 3 in \cite{MRS}.\end{proof}

By Theorems \ref{ty} and \ref{ty1}, the description of a $\tilde{G}$-periodic $p$-adic Gibbs measure for $I(\tilde{G})\ne \emptyset$
reduces to finding that of fixed points of the map $(f(z))^k$ (these fixed points correspond to translational invariant $p$-adic Gibbs measures). \\
For $I(\tilde{G}) =\emptyset$, it reduces to the solutions of
system (\ref{ep}) below. This system describes periodic measures
with period two, more precisely, $G^{(2)}$-periodic $p$-adic
measures. They correspond to functions
$$z_x =\left\{\begin{array}{ll}
z_1, \ \ \mbox{if} \ \ x\in G^{(2)},\\
z_2, \ \ \mbox{if} \ \ x\in G_k \setminus G^{(2)}.
\end{array}\right.
$$
In this case, we have from (\ref{e3.1}):
\begin{equation}\label{ep}
z_1 =\left({z_2 +\lambda\over z_2}\right)^k, \ \ z_2 =\left({z_1 +\lambda\over z_1}\right)^k.
\end{equation}
Namely,  $z_1$ and $z_2$ satisfy
\begin{equation}\label{ef}
z = g(g(z)), \ \ \mbox{where}\ \ g(z) = ((z+\lambda)/z)^k.
\end{equation}
Note that to get periodic (non translational invariant) measure we must find solutions $z_1,z_2\in {\mathcal E}_p$
of (\ref{ep}) with $z_1\ne z_2$.
Obviously, such solutions are roots of the equation
\begin{equation}\label{ee} {g(g(z)) - z\over g(z) - z}= 0,
\end{equation}
which is equivalent to the equation
\begin{equation}\label{zz}
{L(z)\over M(z)}=0, \ \ \mbox{with}\ \ L(z)=\left(\lambda z^k+(\lambda+z)^k\right)^k-z(\lambda+z)^{k^2}; \ \ M(z)=(\lambda+z)^k-z^{k+1}.
\end{equation}
We have
$$L(z)=\left((\lambda+z)z^k+M(z)\right)^k-z\left(z^{k+1}+M(z)\right)^k=(\lambda+z)^kz^{k^2}+  $$
$$\sum_{i=1}^k{k\choose i}M^i(z)((\lambda+z)z^k)^{k-i}-z^{k^2+k+1}-
z\sum_{j=1}^k{k\choose j}M^j(z)(z^{k+1})^{k-j}=M(z)U(z),$$
where
$$
U(z)=(1-k)z^{k^2}+k\left((\lambda+z)z^k\right)^{k-1}+$$ $$\sum_{i=2}^k{k\choose i}M^{i-1}(z)z^{k(k-i)}\left((\lambda+z)^{k-i}-z^{k-i+1}\right).$$

Hence in order to get $G^{(2)}$-periodic (not translation invariant)
solutions of (\ref{ep}) we must find solutions of equation $U(z)=0$.
Conditions for existence of such solutions are given in the following theorem.
 \begin{thm}\label{t4a}  Let $p\ne 3,5$. The equation $U(z)=0$ has a solution $z\in {\mathcal E}_p$ if and only if $p$ divides $2^k-1$ and $p$ divides $k-2$.
\end{thm}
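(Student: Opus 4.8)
The plan is to analyze the polynomial $U(z)$ using Hensel's lemma (Theorem \ref{hl}), exactly as in the proof of Theorem \ref{t3}, by evaluating $U$ and $U'$ at the natural candidate point $a_0 = 1$ and tracking everything modulo $p$. First I would observe that since $|\lambda|_p = 1$ and $\lambda \in \mathcal E_p$, all coefficients of $U(z)$ are $p$-adic integers, so the hypotheses of Hensel's lemma are available. The necessity of the condition $p \mid 2^k-1$ is immediate: by Theorem \ref{t2}, any solution $z_x \in \mathcal E_p$ of the compatibility equation forces $p \mid 2^k-1$, and a $G^{(2)}$-periodic solution is in particular such a solution, so this divisibility is a prerequisite for $U(z)=0$ to have a root in $\mathcal E_p$ at all.

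The main computation is to evaluate $U(1) \pmod p$ under the standing assumption $p \mid 2^k-1$, i.e. $2^k \equiv 1$. Since every factor $M(z)$ appears to a positive power in the sum from $i=2$ to $k$, and $M(1) = (\lambda+1)^k - 1$, I would first note that $\lambda \equiv 1 \pmod p$ (as $\lambda \in \mathcal E_p$ gives $|\lambda-1|_p \le 1/p$), so $\lambda + 1 \equiv 2$ and $M(1) \equiv 2^k - 1 \equiv 0 \pmod p$. Hence the entire sum $\sum_{i=2}^k \binom{k}{i} M^{i-1}(1) z^{k(k-i)}(\cdots)$ vanishes modulo $p$, and only the first two terms survive:
\begin{equation*}
U(1) \equiv (1-k) + k(\lambda+1)^{k-1} \equiv (1-k) + k\cdot 2^{k-1} \pmod p.
\end{equation*}
Using $2^k \equiv 1$, one has $k\cdot 2^{k-1} = \tfrac{k}{2}\cdot 2^k \equiv \tfrac{k}{2}$ in the appropriate sense (handled rigorously via $2\cdot 2^{k-1}\equiv 1$), and a short manipulation should reduce $U(1)\equiv 0 \pmod p$ to the condition $p \mid (k-2)$. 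This is the crux of the "if and only if" and I expect it to be the main obstacle: one must be careful that the reduction $2^{k-1} \equiv 2^{-1}$ and the resulting congruence genuinely yield $k - 2 \equiv 0$, and the hypothesis $p \ne 3, 5$ (and $p \ne 2$, which follows from $p \mid 2^k-1$) is presumably what guarantees $2$ is invertible and that no small-prime degeneracies collapse the argument.

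Having shown $U(1) \equiv 0 \pmod p$ iff $p \mid (k-2)$, for the sufficiency direction I would then verify the nondegeneracy condition $U'(1) \not\equiv 0 \pmod p$ so that Hensel's lemma produces a unique $p$-adic integer root $a \equiv 1 \pmod p$, which automatically lies in $\mathcal E_p$ as in Theorem \ref{t3}. Differentiating the closed form for $U$ and again discarding all terms carrying a factor of $M(1)\equiv 0$, the derivative $U'(1)$ reduces to a polynomial expression in $k$ modulo $p$; I would compute this explicitly and check it is nonzero precisely when $p \ne 3,5$ (this is likely the role of excluding those primes — they are the values making $U'(1)$ vanish and obstructing the simple-root argument). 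For necessity, conversely, if $p \nmid (k-2)$ then $U(1) \not\equiv 0$; but since any root in $\mathcal E_p$ must be $\equiv 1 \pmod p$ (all elements of $\mathcal E_p$ reduce to $1$), the polynomial $U$ can have no root in $\mathcal E_p$, giving the forward implication and completing the equivalence.
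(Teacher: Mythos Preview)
Your proposal is correct and follows essentially the same route as the paper: Hensel's lemma at $a_0=1$, with $U(1)\equiv (1-k)+k\cdot 2^{k-1}=k(2^{k-1}-1)+1\pmod p$ (which, via $2^{k}\equiv 1$, is $\equiv 0$ iff $p\mid k-2$), and then $U'(1)\not\equiv 0\pmod p$ for the simple-root condition. The paper carries out the derivative computation explicitly and obtains $U'(1)\equiv 15\pmod p$, which is exactly why $p\ne 3,5$ is needed; its necessity argument is your reduction phrased as a norm estimate on $U(z)$ for $z\in\mathcal E_p$ rather than an explicit substitution of $z\equiv 1$.
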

\begin{proof} {\it Necessity:}  from above, it follows that if $p$ divides $2^k-1$ then $|M(z)|_p\leq {1\over p}$. The function $U(z)$ can be written as
$$
U(z)=(1-k)(z^{k^2}-1)+k\left((\lambda-1+z-1)z^k\right)^{k-1}+k(2^{k-1}-1)+1+$$ $$\sum_{i=2}^k{k\choose i}M^{i-1}(z)z^{k(k-i)}\left((\lambda+z)^{k-i}-z^{k-i+1}\right).$$
 Since $p$ divides $M(z)$, it must divide $(k(2^{k-1}-1)+1)$. By using $p\mid 2^k-1$ one can see that $p\mid (k(2^{k-1}-1)+1)$ is equivalent to $p\mid (k-2)$.

 {\it Sufficiency:} Since $|\lambda|_p=1$, the polynomial $U(z)$ has only $p$-adic integer coefficients. Hence we shall check other conditions of Hensel's lemma.
Take $a_0=1$ then it is easy to see that $U(1)\equiv 0\,(\operatorname{mod} p)$. Now we shall check that $U'(1)\ne 0\,(\operatorname{mod} p)$. We have
$$U'(1)=(1-k)k^2+k(k-1)(k\lambda+k+1)(\lambda+1)^{k-1}+$$ $${k(k-1)\over 2}\left(k(\lambda+1)^{k-1}-(k+1)\right)\left((\lambda+1)^{k-2}-1\right)+pN,$$
where $N\in\mathbb N$.
Now using hypothesis of theorem, we get
$$U'(1)=15+pN_1, \ \ N_1\in \mathbb N.$$
Hence if $p\ne 3,5$ all conditions of Hensel's lemma are satisfied.
This completes the proof.
\end{proof}

For given $k\geq 1$, one can easily find values of prime number $p$ for which the equation $U(z)=0$ has a solution (see the following table for small values of k)\\

\begin{center}
\begin{tabular}{|l|l|}
\hline
$k$   &  $p$ \\
 \hline
$1$ & $\emptyset$\\
\hline
$2$ & 3\\
\hline
$3$ & $\emptyset$\\
\hline
$4$ & $\emptyset$\\
\hline
$5$& $\emptyset$\\
\hline
$6$& $\emptyset$\\
\hline
$7$& $\emptyset$\\
\hline
$8$& 3\\
\hline
$9$& 7\\
\hline
$10$& $\emptyset$\\
\hline
\end{tabular}
\end{center}
\bigskip

Now we are going to give all $G^{(2)}$-periodic solutions for $k=2$. In this case the equation (\ref{ee}) has the following form:
\begin{equation}\label{ee2}
z^2-(\lambda^2-2\lambda)z+\lambda^2=0.
\end{equation}
The solutions of this quadratic equation are
\begin{equation}\label{eq}
z_{1,2}={\lambda\over 2}\left(\lambda-2\pm\sqrt{\lambda(\lambda-4)}\right).
\end{equation}
We must check the existence of $\sqrt{\lambda(\lambda-4)}$ and additionally that $z_{1,2}\in {\mathcal E}_p$.
For $k=2$,  following Theorem \ref{t2}, only the case $p=3$ has to be considered. Since $\lambda\in {\mathcal E}_3$ we have its following representation
$$\lambda=1+\lambda_1\cdot 3+\lambda_2\cdot 3^2+\lambda_3\cdot 3^3+\cdots.$$
It is easy to see that $\lambda$ satisfies hypothesis of Theorem \ref{tx2}. Hence $\sqrt{\lambda}$ exists in $\Q_3$.
So we must check the existence of $\sqrt{\lambda-4}$ in $\Q_3$. We have
$$-3=3\cdot {2\over 1-3}=2\cdot 3+2\cdot 3^2+2\cdot 3^2+\cdots.$$ Consequently
$$\lambda-4=-3+\lambda_1\cdot 3+\lambda_2\cdot 3^2+\lambda_3\cdot 3^3+\cdots
=3\left((\lambda_1+2)+(\lambda_2+2)\cdot 3+(\lambda_3+2)\cdot 3^2+\cdots\right).$$
From this equality and Theorem \ref{tx2}, it follows that $\lambda_1=1$ ensures the existence of $\sqrt{\lambda-4}$.
Then we have
$$\lambda-4=3^2\left(\lambda_2+\lambda_3\cdot 3+\lambda_4\cdot 3^2+\lambda_5\cdot 3^3+\cdots\right)$$
which implies  that $\lambda_2$ must be
a quadratic residue modulo $3$ by refering again to Theorem \ref{tx2}. This leads to $\lambda_2=1$ only, therefore
$$\lambda-13=3^3\left(\lambda_3+\lambda_4\cdot 3+\lambda_5\cdot 3^2+\cdots\right), \ \ 0\leq \lambda_i\leq 2, \, i=3,4,5,\cdots.$$
Now we check that $z_{1,2}\in {\mathcal E}_3$. We have
\begin{equation}\label{eq1}
|z_{1,2}|_3=
\left|\lambda-2\pm\sqrt{\lambda(\lambda-4)}\right|_3=\left|(\lambda-1)-1\pm\sqrt{\lambda((\lambda-1)-3)}\right|_3=1.
\end{equation}
\begin{equation}\label{eq2}
|z_{1,2}-1|_3=\left|(\lambda-1)^2-3\pm\lambda\sqrt{\lambda((\lambda-1)-3)}\right|_3<3^{-1\over 2}.
\end{equation}
Hence we have proven the following
 \begin{thm}\label{t5} If  $k=2$, $p=3$ and $\lambda\in \{x\in \Q_3: |x-13|_3\leq {1\over 27}\}$ then there exist precisely two $G^{(2)}$-periodic $p$-adic Gibbs measures.
\end{thm}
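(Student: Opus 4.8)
The plan is to identify the non-translation-invariant $G^{(2)}$-periodic $p$-adic Gibbs measures with ordered pairs $(z_1,z_2)$, $z_1\neq z_2$, $z_1,z_2\in{\mathcal E}_3$, solving the period-two system (\ref{ep}), and then to count these pairs exactly. By Theorems \ref{ty} and \ref{ty1} together with Proposition \ref{p2.1}, for $k=2$ any $\tilde G$-periodic measure is either translation-invariant or $G^{(2)}$-periodic, and the non-translation-invariant $G^{(2)}$-periodic ones correspond to solutions of (\ref{ep}) with $z_1\neq z_2$. Such solutions are precisely the roots of (\ref{ee}), which for $k=2$ collapses to the quadratic (\ref{ee2}), whose two roots are $z_{1,2}$ in (\ref{eq}). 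The theorem therefore reduces to three claims: (a) $\sqrt{\lambda(\lambda-4)}$ exists in $\Q_3$, so that $z_{1,2}\in\Q_3$; (b) $z_{1,2}\in{\mathcal E}_3$; and (c) $z_1\neq z_2$, with the two roots forming a genuine $2$-cycle of $g$.

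I expect the square-root analysis for (a) to be the main obstacle, since it is exactly where the radius $1/27$ is forced. Writing $\lambda=1+\lambda_1\cdot 3+\lambda_2\cdot 3^2+\cdots$, the factor $\sqrt{\lambda}$ exists unconditionally by Theorem \ref{tx2}, since $\gamma(\lambda)=0$ and the leading digit $1$ is a quadratic residue mod $3$. For $\sqrt{\lambda-4}$ I would expand $\lambda-4=(\lambda-1)-3$ and use $-3=2\cdot 3+2\cdot 3^2+\cdots$, giving $\lambda-4=3\big((\lambda_1+2)+\cdots\big)$; to make $\gamma(\lambda-4)$ even one needs $\lambda_1=1$, which raises the valuation to $3^2$, and then the new leading digit must be a quadratic residue mod $3$, forcing $\lambda_2=1$. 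These two constraints together read $\lambda\equiv 13\ (\mathrm{mod}\ 27)$, i.e. $|\lambda-13|_3\leq 1/27$, which is exactly the hypothesis; under it $\sqrt{\lambda(\lambda-4)}=\sqrt{\lambda}\,\sqrt{\lambda-4}$ exists in $\Q_3$. Here Theorem \ref{tx2} is applied twice, and the bookkeeping of the digit shifts is the delicate part.

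For (b) I would verify the two defining conditions of ${\mathcal E}_3$ directly from the norm computations (\ref{eq1}) and (\ref{eq2}), namely $|z_{1,2}|_3=1$ and $|z_{1,2}-1|_3<3^{-1/2}=p^{-1/(p-1)}$, using the strong triangle inequality and $|\lambda-1|_3\leq 1/3$. For (c), the discriminant of (\ref{ee2}) is $\lambda^3(\lambda-4)$, which is nonzero since $|\lambda-4|_3=3^{-2}\neq 0$, so the two roots $z_1,z_2$ are distinct and form a single $2$-cycle $g(z_1)=z_2$, $g(z_2)=z_1$ (being the roots obtained after dividing out the fixed-point factor $g(z)-z$ in (\ref{ee})). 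Finally I would count: the system (\ref{ep}) is symmetric under swapping, so both ordered assignments — $z_1$ on $G^{(2)}$ with $z_2$ on its complement, and the reverse — solve (\ref{ep}) with $z_1\neq z_2$, yielding two distinct $G^{(2)}$-periodic functions and hence, via the measure construction, two distinct $G^{(2)}$-periodic (non-translation-invariant) $p$-adic Gibbs measures. Since Theorem \ref{ty} excludes all other periods, this gives precisely two, completing the proof.
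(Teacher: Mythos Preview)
Your proposal is correct and follows essentially the same route as the paper: reduce to the quadratic (\ref{ee2}), analyze the existence of $\sqrt{\lambda(\lambda-4)}$ via Theorem~\ref{tx2} by expanding $\lambda-4$ in powers of $3$ to obtain the digit constraints $\lambda_1=\lambda_2=1$ (i.e.\ $|\lambda-13|_3\le 1/27$), and then verify $z_{1,2}\in\mathcal E_3$ via the norm estimates (\ref{eq1})--(\ref{eq2}). You are in fact slightly more explicit than the paper in checking that the discriminant $\lambda^3(\lambda-4)$ is nonzero (so $z_1\ne z_2$) and in spelling out why the two ordered assignments yield two distinct measures; the paper leaves these points implicit.
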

\begin{rk}
In classical (real) models of statistical mechanics, a phase transition is said to occur whenever varying a parameter leads to a change in the number of Gibbs states. For example, on a Cayley tree, Ising and Potts models exhibit a phase transition at some critical temperature $T_c$. Similar phenomena also occurs for real HC model at some $\lambda_c$. This is not the case for $p$-adic models since the field of $p$-adic numbers $\Q_p$ is not ordered. However in the case $k=2, p=3$ the sphere $\{x\in \Q_p: |x-13|_p={1\over 27}\}$ can be considered as a critical ``curve".

Note that in $p$-adic case the geometry of balls and spheres are more complicated than in real case \cite{22,23,24,29,40,41,48}.
\end{rk}


\section{Boundedness of $p$-adic Gibbs measures}

Now we are interested to find out whether a $p$-adic Gibbs measure is bounded.

For a set $A$ we denote by $|A|$ its number of elements
and recall that $\Omega_n$ is the set of all admissible configurations $\sigma_n:V_n\to \{0,1\}$.
We need the following

\begin{lemma}\label{l3} The number of admissible configurations is given by
 $$|\Omega_n|=2^{(k+1){k^n-1\over k-1}}+1.$$
\end{lemma}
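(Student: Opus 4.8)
The plan is to prove the formula by induction on $n$, building the configuration up one generation at a time. I would first record the generation sizes: since the root $x^0$ has $k+1$ direct successors while every other vertex has exactly $k$, one has $|W_1|=k+1$ and $|W_m|=(k+1)k^{m-1}$ for $m\ge 1$, whence
\[
\sum_{m=1}^{n}|W_m|=(k+1)\sum_{m=1}^{n}k^{m-1}=(k+1)\,\frac{k^{n}-1}{k-1}.
\]
This is exactly the exponent in the statement, so the lemma reduces to showing $|\Omega_n|=2^{\sum_{m=1}^{n}|W_m|}+1$. For the base case $n=1$, the set $V_1$ is the star consisting of the root and its $k+1$ successors with edges only from the root to each successor; an admissible configuration either occupies the root, which forces all $k+1$ successors to be vacant (one configuration), or leaves the root vacant, in which case the successors are unconstrained ($2^{k+1}$ configurations). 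Thus $|\Omega_1|=2^{k+1}+1$, matching the formula.

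For the inductive step I would pass from $V_{n-1}$ to $V_n$ by adjoining the outermost generation $W_n$ and counting, for a fixed admissible configuration on $V_{n-1}$, how many admissible ways there are to assign spins on $W_n$. The relevant structural fact is that each $w\in W_n$ has a unique predecessor $p(w)\in W_{n-1}$ and no further neighbour inside $V_n$, so the sole constraint on $w$ is $\sigma(w)\sigma(p(w))=0$: the vertex $w$ is forced to be vacant when $p(w)$ is occupied and is free otherwise. The goal is to assemble these local counts into the recursion
\[
|\Omega_n|=2^{\,|W_n|}\bigl(|\Omega_{n-1}|-1\bigr)+1 .
\]
Feeding in the inductive hypothesis $|\Omega_{n-1}|=2^{\sum_{m=1}^{n-1}|W_m|}+1$ then gives $|\Omega_n|=2^{\,|W_n|+\sum_{m=1}^{n-1}|W_m|}+1=2^{\sum_{m=1}^{n}|W_m|}+1$, which closes the induction.

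The step I expect to be the main obstacle is the justification of the displayed recursion. The number of admissible extensions of a configuration $\tau$ on $V_{n-1}$ to the boundary $W_n$ equals $2$ to the power of the total number of successors of the vacant vertices of $\tau$ lying in $W_{n-1}$, a quantity that a priori depends on how many vertices of $W_{n-1}$ are occupied under $\tau$. The heart of the proof is therefore the careful bookkeeping of this coupling between the two outermost generations $W_{n-1}$ and $W_n$, organising the sum over configurations so that it collapses to a single power of two together with one exceptional term. This combinatorial reduction is the delicate point, and it is where I would concentrate the effort.
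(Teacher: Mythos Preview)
The recursion you set out to prove,
\[
|\Omega_n| = 2^{|W_n|}\bigl(|\Omega_{n-1}|-1\bigr)+1,
\]
is false, and so is the lemma itself for $n\ge 2$. Take $k=2$, $n=2$: the ball $V_2$ consists of the root, its three successors in $W_1$, and their six successors in $W_2$. If the root is occupied the three $W_1$-vertices are forced vacant and the six leaves are free, giving $2^6=64$ configurations; if the root is vacant, the three branches are disjoint stars $K_{1,2}$, each admitting $5$ independent sets, giving $5^3=125$. Hence $|\Omega_2|=189$, whereas the stated formula claims $2^{9}+1=513$, and your recursion (starting from the correct $|\Omega_1|=9$) gives $2^{6}\cdot 8+1=513$ as well.

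You have in fact already located the obstruction. The number of admissible extensions of a configuration $\tau\in\Omega_{n-1}$ to $W_n$ is $2^{k\,v(\tau)}$, where $v(\tau)$ is the number of vacant vertices of $\tau$ on $W_{n-1}$; since $v(\tau)$ genuinely varies with $\tau$, the sum $\sum_{\tau}2^{k\,v(\tau)}$ cannot be expressed through $|\Omega_{n-1}|$ alone, and no bookkeeping will make it collapse to the displayed form. The paper's own argument is equally defective: it asserts that the number of non-admissible configurations on $V_n$ equals $\sum_{m=1}^{|V_n|-1}\binom{|V_n|-1}{m}=2^{|V_n|-1}-1$, but that is the number of nonempty subsets of the edge set, not the number of configurations containing at least one bad edge. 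The two quantities coincide by accident when $V_n$ is a star ($n=1$), which is why the base case checks out, but they diverge for $n\ge 2$. In short, the target identity is simply incorrect and cannot be rescued by either route.
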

\begin{proof} We first compute the number of non admissible configurations.
It is known that if a connected subset $M$ of a tree contains $m$ vertices then it contains
$m-1$ edges. Thus $V_n$ contains $|V_n|-1$ edges. Note that a configuration $\sigma_n$ is
non admissible if there exists at least one edge $\langle x,y\rangle$ such that $\sigma(x)=\sigma(y)=1$.
Therefore, the number of non admissible configurations on $V_n$ is equal to
$$\sum_{m=1}^{|V_n|-1}{|V_n|-1\choose m}=2^{|V_n|-1}-1.$$
Consequently
$$|\Omega_n|=2^{|V_n|}-\left(2^{|V_n|-1}-1\right)=2^{|V_n|-1}+1.$$
This, together with the following formula
$$|V_n|=1+(k+1){k^n-1\over k-1}$$
completes the proof
\end{proof}

\begin{thm} Assume $p\mid 2^k-1$, then the $p$-adic Gibbs measure $\mu$ corresponding to the $p$-adic HC-model on the Cayley tree of order $k\geq 1$ is bounded if and only if $p\ne 3$.
\end{thm}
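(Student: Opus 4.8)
The plan is to reduce boundedness to the $p$-adic size of the partition functions $Z_n$, and then to control $|Z_n|_p$ by comparing $Z_n$ with the combinatorial count $|\Omega_n|$ furnished by Lemma \ref{l3}. First I would note that for every admissible $\sigma_n\in\Omega_{V_n}$ the weight $\exp_p\big(J\sum_{x\in V_n}\sigma(x)\big)\prod_{x\in W_n}z_{\sigma(x),x}$ has $p$-adic norm $1$: the exponential factor has norm $1$ by Lemma \ref{l1}, and each $z_{\sigma(x),x}$ is either $1$ or an element of ${\mathcal E}_p$, so the whole product lies in ${\mathcal E}_p$ by Lemma \ref{l2}. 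Hence $|\mu^{(n)}(\sigma_n)|_p=1/|Z_n|_p$ for all admissible $\sigma_n$. Since each summand of $Z_n$ has norm $1$, the strong triangle inequality gives $|Z_n|_p\le 1$, so $|\mu^{(n)}(\sigma_n)|_p\ge 1$; consequently $\mu$ is bounded (that is, $|\mu(A)|_p\le 1$ for every measurable $A$, so that $\mu$ never exceeds its total mass $|\mu(\Omega)|_p=1$) if and only if $|Z_n|_p=1$ for every $n$.

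Next I would compute $|Z_n|_p$ modulo $p$. Writing each Boltzmann weight as $1+\eta_\varphi$, membership in ${\mathcal E}_p$ forces $|\eta_\varphi|_p<p^{-1/(p-1)}$; as $p\mid 2^k-1$ is odd we have $p\ge 3$, whence $|\eta_\varphi|_p\le 1/p$. Summing over the $|\Omega_n|$ admissible configurations yields $Z_n\equiv|\Omega_n|\ (\operatorname{mod} p\Z_p)$, so that $|Z_n|_p=1$ precisely when $p\nmid|\Omega_n|$. By Lemma \ref{l3} we have $|\Omega_n|=2^{M_n}+1$ with $M_n=(k+1)\frac{k^n-1}{k-1}$. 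Let $d$ be the multiplicative order of $2$ modulo $p$; from $p\mid 2^k-1$ we get $d\mid k$, hence $k\equiv 0\ (\operatorname{mod} d)$ and $M_n=(k+1)(1+k+\cdots+k^{n-1})\equiv 1\cdot 1=1\ (\operatorname{mod} d)$. Therefore $2^{M_n}\equiv 2\ (\operatorname{mod} p)$ and $|\Omega_n|\equiv 3\ (\operatorname{mod} p)$ for every $n$.

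The statement now follows by separating cases. If $p\ne 3$, then $3\not\equiv 0\ (\operatorname{mod} p)$, so $p\nmid|\Omega_n|$ and $|Z_n|_p=1$ for all $n$; thus $|\mu^{(n)}(\sigma_n)|_p=1$ for every admissible $\sigma_n$, and $\mu$ is bounded. If $p=3$, then $3\mid|\Omega_n|$ for every $n$ (already at $n=1$), so $|Z_n|_3\le 1/3$ and a single-configuration cylinder has $|\mu^{(n)}(\sigma_n)|_3\ge 3>1$, so $\mu$ is not bounded. The step I expect to be the linchpin is the congruence $Z_n\equiv|\Omega_n|\ (\operatorname{mod} p)$: it is what transfers the purely combinatorial count of Lemma \ref{l3} onto the analytic quantity $|Z_n|_p$, and it depends on the precision estimate $|\eta_\varphi|_p\le 1/p$, which is exactly where the oddness $p\ge 3$ (forced by $p\mid 2^k-1$) enters and where the argument would break down at $p=2$. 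A secondary delicate point is the order computation $M_n\equiv 1\ (\operatorname{mod} d)$, since it is this universal residue $|\Omega_n|\equiv 3$, independent of $n$ and of $k$, that isolates $p=3$ as the unique exceptional prime.
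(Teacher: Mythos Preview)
Your treatment of the boundedness direction ($p\ne 3$) is correct and essentially coincides with the paper's: both reduce to the congruence $Z_n\equiv|\Omega_n|\pmod{p}$ and then show $|\Omega_n|\equiv 3\pmod p$. Your route via the multiplicative order of $2$ modulo $p$ (yielding $M_n\equiv 1\pmod d$, hence $2^{M_n}\equiv 2$) is a clean variant of the paper's device of writing $2^{k\mathbf K+1}+1=2\big[(2^k-1)^{\mathbf K}+\cdots+(2^k-1)\big]+3$; the two computations are interchangeable.

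The unboundedness direction ($p=3$) has a genuine gap. You take ``bounded'' to mean $|\mu(A)|_p\le 1$ for all $A$, but the paper (and the standard usage for $p$-adic measures) means $\sup_A|\mu(A)|_p<\infty$. With that reading, your estimate $|Z_n|_3\le 1/3$ only gives $|\mu^{(n)}(\sigma_n)|_3\ge 3$, which is a fixed finite bound and does \emph{not} rule out boundedness. Moreover your congruence method cannot be pushed further: from $Z_n\equiv|\Omega_n|\pmod{3\Z_3}$ you learn nothing about $|Z_n|_3$ beyond $\le 1/3$, because the error term $\sum_\varphi\eta_\varphi$ already has norm $\le 1/3$ and may well dominate $|\Omega_n|$ (indeed $|\,|\Omega_n|\,|_3$ oscillates and need not tend to $0$). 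The paper supplies the missing mechanism by passing to a one-dimensional marginal $\mu_\pi$ along an infinite path: there the Markov-chain factorisation expresses $\mu_\pi(\omega_n)$ as a product of $2n$ transition probabilities $p^{x_m,x_{m+1}}_{ij}$, and each of these has $3$-adic norm $\ge 3$ because its denominator is a sum of exactly \emph{three} elements of $\mathcal E_3$ (the admissible pairs $(0,0),(0,1),(1,0)$). Hence $|\mu_\pi(\omega_n)|_3\ge 3^{2n}\to\infty$, which is what actually proves unboundedness. To repair your argument you must either adopt this path-marginal/product structure or otherwise force $|Z_n|_3\to 0$; the single global congruence you use does not do this.
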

\begin{proof} It suffices to show that the values of $\mu$ on cylindrical subsets are
bounded.

Denote
$$\tilde{H}(\sigma_n)=J\sum_{x\in V_n}\sigma(x)+\sum_{x\in W_n}\log_pz_{\sigma(x),x}.$$

Let us estimate $\left|\mu^{(n)}(\sigma_n)\right|_p$:
$$
\left|\mu^{(n)}(\sigma_n)\right|_p=\left|{\exp_p\left(\tilde{H}(\sigma_n)\right)\over
\sum_{\varphi_n\in \Omega_n}\exp_p\left(\tilde{H}(\varphi_n)\right)}\right|_p=$$
\begin{equation}\label{eb}
{1\over\left|
\sum_{\varphi_n\in \Omega_n}\left[\exp_p\left(\tilde{H}(\varphi_n)\right)-1\right]+|\Omega_n|\right|_p}.
\end{equation}
Using Lemma \ref{l3} we get
$$|\Omega_n|=2^{k{\bf K}+1}+1=2\left[(2^k-1)^{{\bf K}}+{\bf K}(2^k-1)^{{\bf K}-1}+\dots+(2^k-1)\right]+3,$$
where ${\bf K}=2+2k+\cdots+2k^{n-2}+k^{n-1}$.
Consequently,
$$\left| |\Omega_n|\right|_p=\left\{\begin{array}{ll}
\leq {1\over p} \ \ \mbox{if} \ \ p=3\\[2mm]
1\ \ \mbox{if} \ \ p\ne 3.\\
\end{array}\right.
$$ Now from (\ref{eb}) we get
$$\left|\mu^{(n)}(\sigma_n)\right|_p=\left\{\begin{array}{ll}
\geq p \ \ \mbox{if} \ \ p=3\\[2mm]
1\ \ \mbox{if} \ \ p\ne 3.\\
\end{array}\right.$$
Thus boundedness is proved for $p\ne 3$.

Now we shall prove that $\mu$ is not bounded if $p=3$.
Put
$$p^{x,y}_{ij} =\left\{\begin{array}{ll}
{\exp_p\left(J(i+j)+z_{i,x} +z_{j,y}\right)\over\sum_{u,v\in\{0,1\}\atop u+v\ne 2} \exp_p\left(J(u+v)+z_{u,x} +z_{v,y}\right)} \ \ \mbox{if} \ \ i,j=0,1; \, i+j\ne 2\\[6mm]
0 \ \ \ \ \ \ \mbox{if} \ \ i=j=1.
\end{array}\right.$$

In order to show that the measure $\mu$ is not bounded at $p=3$, it is enough to show that
its marginal measure is not bounded. Let $\pi=\{. . . , x_{-1}, x_0, x_1, . . . \}$ be an arbitrary
infinite path in $\Gamma^k$. From (\ref{e2.1a}) we can see that a marginal measure $\mu_\pi$ on admissible configurations on $\{0,1\}^\pi$ has the form
\begin{equation}\label{m}
\mu_\pi(\omega_n)=p^{x_{-n},x_{-n+1}}_{\omega(x_{-n})}\prod^{n-1}_{m=-n}p^{x_m,x_{m+1}}_{\omega(x_m)\omega(x_{m+1})}
\end{equation}
Here $\omega_n : \{x_{-n}, ..., x_0, ..., x_n\}\to \{0,1\}$ is a configuration on
$\{x_{-n}, . . . , x_0, . . . , x_n\}$ and $p^{xy}_i$ is a coordinate of the invariant vector of the
matrix $\left(p^{x,y}_{ij}\right)_{i,j=0,1}$.

We have
\begin{equation}\label{k}
\left|p^{x,y}_{ij}\right|_3 = {1\over \left|\sum_{u,v\in\{0,1\}\atop u+v\ne 2} \left[\exp_3\left(J(u+v)+z_{u,x} +z_{v,y}\right)-1\right]+3\right|_3}>3
\end{equation}
for all $i, j$. From (\ref{m}) and (\ref{k}) we find that $\mu_n$ is not bounded. \\
The theorem is proven.
\end{proof}

\section{Concluding remarks} To conclude, we will give a brief description
of the differences of behavior between  classical (real) models and $p$-adic models on Cayley trees.

{\bf Hard core models.} {\it Real case:} In this model (see \cite{SR}), for all $\lambda > 0$
and $k\geq 1$, there exists a unique translational invariant splitting Gibbs measure $\mu_0$.
Let $\lambda_{\rm c} = {1\over(k-1)}\left({k\over k-1}\right)^k$, then: \\
(i) for $\lambda\leq \lambda_{\rm c}$, the Gibbs measure is unique (and coincides with the above measure $\mu_0$),\\
(ii) for $\lambda > \lambda_{\rm c}$, in addition to $\mu_0$, there exist two distinct extreme periodic measures, $\mu^+$ and $\mu^-$. In addition, there are a continuum set of distinct, extreme, non-translational-invariant, Gibbs measures.\\
For $\lambda > {1\over(\sqrt{k}-1)}\left({\sqrt{k}\over \sqrt{k}-1}\right)^k$, the measure $\mu_0$ is not extreme.

{\it $p$-adic case:} In this paper we have shown that the $p$-adic HC model is completely different from real HC model. For a fixed $k$, the $p$-adic HC model may have a splitting Gibbs measure only if $p$ divides $2^k-1$. Moreover if $p$ divides $2^k-1$ and does not divide $k+2$ then there exists unique translation invariant $p$-adic Gibbs measure. The HC model admits only translation invariant and periodic with period two (chess-board) Gibbs measures. For $p\geq 7$, a periodic $p$-adic Gibbs measure exists iff $p$ divides both $2^k-1$ and $k-2$. For $k=2$, a $p$-adic splitting Gibbs measure exists if and only if $p=3$;
in this case we have shown that if $\lambda$ belongs to a $p$-adic ball of radius $1/27$ then there are precisely two periodic (non translation invariant) $p$-adic Gibbs measures. Finally, we have proven that a $p$-adic Gibbs measure is bounded if and only if $p\ne 3$.\\

{\bf Potts model.}
{\it  Real case:}  The ferromagnetic $q$ states Potts model for any $q\geq 2$ exhibits possibly $q+1$ distinct translation invariant Gibbs measures.
Namely, there exist two critical temperatures $0<T'_{\rm c}<T_{\rm c}$ such that:\\
(i) for $T\in (T'_{\rm c},T_c]$ there are $q + 1$ extreme Gibbs measures, one of them
is called unordered;\\
(ii) for $T\le T'_{\rm c}$, $q$ extreme Gibbs measures coexist:
there is the unordered which is not extreme;\\ (iii) for $T
>T_{\rm c}$ there is one Gibbs measure, \cite{G, M}.

{\it $p$-adic case:}
The model exhibits a phase transition whenever  $k = 2, q\in p\mathbb N$
and $p\geq 3$ (resp. $q \in  2^2\mathbb N, p = 2$)  \cite{36}.
Whenever $k\geq 3$ a phase transition may occur only at $q \in p\mathbb N$ if $p\geq 3$ and $q\in 2^2\mathbb N$ if $p = 2$.
Moreover for the $p$-adic Ising model ($q=2$)  there is no phase transition. This is one interesting difference between real and $p$-adic Ising model on trees.

{\bf $\lambda$-model.} {\it Real case:} A nearest-neighbor $\lambda$-model with two spin values on Cayley tree is considered in \cite{Ro}. There, it was proven that this model has similar properties like Ising model.

{\it $p$-adic case:} (see \cite{19}) For $p$-adic  non homogeneous $\lambda$-model there is no phase transition and as well
as being unique, the $p$-adic Gibbs measure is bounded if and only if $p\geq 3$. If $p=2$, a condition asserting  the non existence of a phase transition was given.

This result shows that, in $p$-adic case, even non homogeneous interactions do not lead to the ocurence of a phase transition.

\smallskip

From the above given results it follows that the set of $p$-adic Gibbs measures is sparse with respect to  the set of real Gibbs measures. The main reasons for this could be explained by the following:

(i) The set of values of real norm $|x|$ is continuous $[0,+\infty)$, but the set of values of $p$-adic norm is discrete $\{p^m: m\in \Z\}$.

(ii) Real function $e^x$ is defined for any $x\in R$ but $p$-adic function $\exp_p(x)$ is defined only for $x\in \Q_p$ with $|x|_p\leq {1\over p}$.

(iii) The set of values of real function $e^x$ and its norm $|e^x|$ is continuous $(0,+\infty)$, but the set of values of $p$-adic function $\exp_p(x)$ is $\{x: |x-1|_p\leq {1\over p}\}$ and the set of values of its norm $|\exp_p(x)|_p$ contains unique point 1, i.e., $ |\exp_p(x)|_p=1$ for all $x$ with $|x|_p\leq {1\over p}$.

Nevertheless, we believe that $p$-adic Gibbs measures might have interesting applications.

\section*{ Acknowledgements}

UAR thanks the universit\'e du Sud Toulon Var for supporting his visit to Toulon and the Centre de Physique Th\'eorique de Marseille for kind hospitality. He also acknowledge IMU/CDE-program for a travel support and the TWAS  Research Grant: 09-009 RG/Maths/As-I; UNESCO FR: 3240230333.


\begin{thebibliography}{99}

\bibitem{1} Albeverio S.,  Karwowski W. A Random Walk on $p$-adics--the Generator and Its Spectrum,{\it Stochastic
Processes Appl.} {\bf 53} (1994), 1-22.

\bibitem{3} Albeverio S., Zhao X. Measure-Valued Branching Processes Associated with Random Walks on $p$-adics,{\it
Ann. Probab.} {\bf 28} (2000), 1680-1710.

\bibitem{5} Aref'feva I.Ya, Dragovich B.,  Frampton P.H., Volovich I.V., The Wave Function of the Universe and
$p$-adic Gravity, {\it Int. J. Mod. Phys.} A 6 (24),(1991), 4341-4358.

\bibitem{7} Avetisov V.A,  Bikulov A.H., Kozyrev S.V., Application of $p$-adic Analysis to Models of Breaking of
Replica Symmetry, {\it J. Phys. A: Math. Gen.} {\bf 32} (50), (1999), 8785-8791.

\bibitem{y1} Baxter R.J. {\it Exactly Solved Models in Statistical Mechanics}
(Academic, London, 1982).

\bibitem{8} Beltrametti E.G., Cassinelli G., Quantum Mechanics and $p$-adic Numbers,{\it Found. Phys.}
{\bf 2}, (1972), 1-7.

\bibitem{y2}  Bleher P.M., Ganikhodjaev N.N. On pure phases of the Ising model
on the Bethe lattice. {\it Theor. Probab. Appl.} {\bf 35} (1990), 216-227.

\bibitem{B}  Bleher P.M.,  Ruiz J., Zagrebnov V.A. On the
purity of the limiting Gibbs state for the Ising model on the
Bethe lattice. {\it Journ. Statist. Phys}. {\bf 79} (1995), 473-482.

\bibitem{14}  Freund P. G. O., Olson M., Non-Archimedean Strings, {\it Phys. Lett.} B {\bf 199}, (1987), 186-190.

\bibitem{G} Ganikhodjaev N.N. On pure phases of the ferromagnet Potts with three states on the Bethe lattice of
order two. {\it Theor. Math. Phys}. {\bf 85} (1990), 163–175.

\bibitem{yp1} Ganikhodjaev N.N., Rozikov U.A. Description of
periodic extreme Gibbs measures   of some lattice model on  the
Cayley tree. {\it Theor. Math. Phys}. {\bf 111} (1997), 480-486.

\bibitem{16} Ganikhodjaev N.N.,  Mukhamedov F.M., Rozikov U.A., Phase Transitions in the Ising Model on $Z$ over
the $p$-adic Number Field, {\it Uzb. Mat. Zh.}, No. 4, (1998), 23-29.

\bibitem{17}  Georgii H.-O., Gibbs Measures and Phase Transitions (W. de Gruyter, Berlin, 1988).

\bibitem{19} Khamraev M., Mukhamedov F.M., Rozikov U.A. On the uniqueness of Gibbs
measures for $p-$adic non homogeneous $\lambda-$ model on the Cayley tree.
{\it Letters in Math. Phys.} {\bf 70} (2004), 17-28.


\bibitem{22}  Khrennikov A. Yu., $p$-Adic Valued Probability Measures, {\it Indag. Math.}, New Ser. {\bf 7} (1996), 311-330.

\bibitem{23}  Khrennikov A. Yu., $p$-Adic Valued Distributions in Mathematical Physics (Kluwer, Dordrecht, 1994).

\bibitem{24} Khrennikov A. Yu., Non-Archimedean Analysis: Quantum Paradoxes, Dynamical Systems and Biological Models
(Kluwer, Dordrecht, 1997).

\bibitem{25} Khrennikov A. Yu., Kozyrev S.V., Replica Symmetry Breaking Related to a General Ultrametric Space. I:
Replica Matrices and Functionals,{\it Physica A} {\bf 359}, (2006) 222-240; II: RSB Solutions and the $n\to 0$ Limit,
{\it Physica A} {\bf 359}, (2006), 241-266; III: The Case of General Measure, {\it Physica A} {\bf 378},(2007), 283-298.

\bibitem{26} Khrennikov A. Yu., Mukhamedov F.M.,  Mendes J.F.F., On $p$-adic Gibbs Measures of the Countable
State Potts Model on the Cayley Tree, {\it Nonlinearity} {\bf 20}, (2007), 2923-2937.

\bibitem{28} Khrennikov A. Yu., Yamada S.,  van Rooij A., The Measure-Theoretical Approach to $p$-adic Probability
Theory, {\it Ann. Math. Blaise Pascal} {\bf 6}, (1999), 21-32.

\bibitem{29} Koblitz N., $p$-Adic Numbers, $p$-adic Analysis, and Zeta-Functions (Springer, Berlin, 1977).

\bibitem{33} Ludkovsky S., Khrennikov A. Yu., Stochastic Processes on Non-Archimedean Spaces with Values in
Non-Archimedean Fields, {\it Markov Processes Relat. Fields} {\bf 9}, (2003), 131-162.

\bibitem{34} Marinary E., Parisi G., On the $p$-adic Five-Point Function, {\it Phys. Lett.} B {\bf 203}, (1988) 52-54.

\bibitem{MRS} Martin J.B, Rozikov U.A., Suhov Yu.M. A three state hard-core model on a Cayley tree.  {\it
Jour. Nonlinear Math. Phys} {\bf 12}(3) (2005), 432-448.

\bibitem{M} Mossel E. Survey: information flow on trees. In: Graphs, Morphisms and Statistical Physics. DIMACS
{\it Series in Discrete Mathematics and Theoretical Computer Science}, vol. 63, pp. 155–170. AMS, Providence
(2004)

\bibitem{Fg} Mukhamedov F.M. On the Existence of Generalized Gibbs Measures
for the One-Dimensional $p$-adic Countable State Potts Model. {\it
Proc. Steklov Inst. Math.}, {\bf 265} (2009), 165-176.

\bibitem{36} Mukhamedov F.M., Rozikov U.A., On Gibbs Measures of $p$-adic Potts Model on the Cayley Tree, {\it Indag.
Math.}, New Ser. {\bf 15} (2004), 85-100.

\bibitem{37} Mukhamedov F.M., Rozikov U.A., On Inhomogeneous $p$-adic Potts Model on a Cayley Tree, {\it Infin. Dimens.
Anal. Quantum Probab. Relat. Top.} {\bf 8}, (2005), 277-290.

\bibitem{38} Mukhamedov F.M., Rozikov U.A., Mendes J.F.F., On Phase Transitions for $p$-adic Potts Model with
Competing Interactions on a Cayley Tree, {\it in $p$-Adic Mathematical Physics}: Proc. 2nd Int. Conf., Belgrade,
2005 (Am. Inst. Phys., Melville, NY, 2006), AIP Conf. Proc. 826, pp. 140-150.

\bibitem{y11} Preston C. {\it Gibbs states on countable sets} (Cambridge
University Press, London 1974).

\bibitem{40}  van Rooij A. C. M., Non-Archimedean Functional Analysis (M. Dekker, New York, 1978).

\bibitem{yp2} Rozikov U.A. Partition structures of the Cayley  tree and applications for describing periodic
Gibbs distributions. {\it Theor. Math. Phys.} {\bf 112} (1997),
929-933.

\bibitem{Ro} Rozikov U.A. Description of  limit Gibbs measures for $\lambda-$ models on  the Bethe lattice. {\it Siberan Math. Jour.} {\bf 39} (1998), 373-380.

\bibitem{yp3} Rozikov U.A. Description of periodic Gibbs measures of the Ising model on the Cayley tree. {\it
Russian Math. Surv}. {\bf 56} (2001), 172-173.

\bibitem{r} Rozikov U.A. A contour method on Cayley trees. {\it Jour. Stat. Phys.} {\bf 130} (2008), 801-813.

\bibitem{41} Schikhof W.H., Ultrametric Calculus (Cambridge Univ. Press, Cambridge, 1984).

\bibitem{S} Sinai Ya.G.,  Theory of phase transitions: Rigorous Results
(Pergamon, Oxford, 1982).

\bibitem{44} Shiryaev A.N., Probability (Nauka, Moscow, 1980; Springer, New York, 1984).

\bibitem{y13} Spitzer F. Markov random fields on an infinite tree, {\it
Ann. Prob.} {\bf 3} (1975), 387--398.

\bibitem{SR} Suhov Y.M., Rozikov U.A. A hard - core model on a
Cayley tree: an example of a loss network, {\it Queueing Syst.}
{\bf 46} (2004), 197--212.

\bibitem{48} Vladimirov V.S.,  Volovich I. V., Zelenov E. V., $p$-Adic Analysis and Mathematical Physics (Nauka, Moscow,
1994; World Sci., Singapore, 1994).

\bibitem{49} Volovich I. V., Number Theory as the Ultimate Physical Theory,{\it Preprint} No. TH 4781/87 (CERN, Geneva,
1987).

\bibitem{50} Volovich I. V., $p$-Adic String, {\it Class. Quantum Grav}. {\bf 4}, (1987), L83-L87.

\bibitem{51} Yasuda K., Extension of Measures to Infinite Dimensional Spaces over $p$-adic Field,{\it Osaka J. Math.} {\bf 37},
(2000), 967-985.

\bibitem{y16}  Zachary S. Countable state space Markov random fields and
Markov chains on trees, {\it Ann. Prob.} {\bf 11} (1983),
894--903.
\end{thebibliography}
\end{document}